\renewcommand{\phi}{\varphi}
\newcommand{\eps}{\varepsilon}
\newcommand{\set}[2]{\left\{#1\mathrel{\left|\vphantom{#1}\vphantom{#2}\right.}#2\right\}}
\newcommand{\oneset}[1]{\left\{\mathinner{#1}\right\}}
\newcommand{\abs}[1]{\left|\mathinner{#1}\right|}
\newcommand{\ubr}[2]{\underbrace{#1\vphantom{g_g}}_{#2}}
\newcommand{\N}{\mathbb{N}}
\newcommand{\Oh}{\mathcal{O}}
\newcommand{\cC}{\mathcal{C}}
\newcommand{\cH}{\mathcal{H}}
\newcommand{\cL}{\mathcal{L}}
\newcommand{\cP}{\mathcal{P}}
\newcommand{\cR}{\mathcal{R}}
\newcommand{\e}{\eps}
\newcommand{\NP}{\ensuremath{\mathrm{NP}}\xspace}
\newcommand{\NL}{\ensuremath{\mathrm{NL}}\xspace}
\newcommand{\Hk}{\cH_k}
\newcommand{\Hks}{\cH_k^*}
\newcommand{\Ha}{\cH_\alp}
\newcommand{\Has}{\cH_\alp^*}
\newcommand{\aL}{\cL_\alp}
\newcommand{\aR}{\cR_\alp}
\newcommand{\Pa}{\cP_\alp}
\newcommand{\oneH}{\cR\!\Hk}
\newcommand{\twoH}{\Hk}
\newcommand{\tHw}{\ensuremath{\twoH^*(\oneset{w})}\xspace}
\newcommand{\oHw}{\ensuremath{\oneH^*(\oneset{w})}\xspace}
\newcommand{\BAR}{\overline{\phantom{ii}}}
\newcommand{\smalloverline}[1]
	{{\mspace{1mu}\overline{\mspace{-1mu}#1\mspace{-1mu}}\mspace{1mu}}}
\newcommand{\ov}[1]{\smalloverline{#1}}
\newcommand{\HCS}{\ensuremath{\mathrm{HCS}}\xspace}
\newcommand{\HCSk}{\ensuremath{\mathrm{HCS}_k}\xspace}
\newcommand{\hpc}{hairpin completion\xspace}
\newcommand{\hpcs}{hairpin completions\xspace}
\newcommand{\phpc}{parameterized \hpc}
\newcommand{\iphpc}{iterated \phpc}
\newcommand{\ihpc}{iterated \hpc}
\newcommand{\ihpcs}{iterated \hpcs}
\newcommand{\bhpc}{bounded \hpc}
\newcommand{\ibhpc}{iterated \bhpc}
\newcommand{\alp}{\alpha}
\newcommand{\gam}{\gamma}
\newcommand{\Sig}{\Sigma}
\newcommand{\sse}{\subseteq}
\newcommand{\gabag}{\gamma\alpha\beta\ov\alpha\ov\gamma}
\newcommand{\gaba}{\gamma\alpha\beta\ov\alpha}
\newcommand{\abag}{\alpha\beta\ov\alpha\ov\gamma}
\newtheorem{theorem}{Theorem}[section]
\newtheorem{lemma}[theorem]{Lemma}
\newtheorem{corollary}[theorem]{Corollary}
\newenvironment{example}
	{\begin{trivlist}\item[\hskip\labelsep {\itshape Example.\,}]}
	{\end{trivlist}}
\begin{document}

\title{On the Iterated Hairpin Completion}
\author 
	{Steffen Kopecki \\
	{\tt kopecki@fmi.uni-stuttgart.de} \\
	\small
	University of Stuttgart,
	Institute for Formal Methods in Computer Science (FMI), \\
	\small
	Universit\"atsstra\ss e 38,
	D-70569 Stuttgart}


\maketitle

\begin{abstract}
	The \emph{(bounded) \hpc }and its iterated versions
	are operations on formal languages
	which have been inspired by the hairpin formation in DNA-biochemistry. 
	The paper answers two questions asked in the literature about the iterated \hpc.
		
	The first question is whether the class of regular languages is closed under
	iterated bounded hairpin completion.
	Here we show that this is true by providing a more general result which applies to all
	the classes of languages 
	which are closed under finite union, intersection with regular sets,
	and concatenation with regular sets.
	In particular, all Chomsky classes and all 
	standard complexity
	classes are closed under iterated bounded hairpin completion.
	
	In the second part of the paper we address the question whether
	the iterated \hpc of a singleton is always regular. 
	In contrast to the first question, this one has a negative answer.
	We exhibit an example of a singleton language whose
	iterated \hpc is not regular, actually it is not context-free, 
	but context-sensitive.
\end{abstract}

\begin{center}
	\textit{Keywords:}
	Formal languages, Finite automata, Hairpin completion, Bounded hairpin completion
\end{center}

\section{Introduction}

The hairpin completion is an operation on formal languages which is inspired by
DNA-comput\-ing and biochemistry
where it appears naturally in chemical reactions. It turned out that the corresponding operation on 
formal languages gives rise to very interesting and quite subtle 
decidability and computational problems.
The focus of this paper is on these formal language theoretical results.
However, let us sketch the biochemical origin of this operation first.

A {\em DNA strand} is a polymer composed of nucleotides
which differ from each other by their bases $A$ (adenine), $C$ (cytosine), $G$ (guanine), and $T$ (thymine).
For our purposes a strand can be seen as a finite sequence of bases.
By {\em Watson-Crick base pairing} two base sequences can bind to each other if they are pairwise complementary,
where $A$ is complementary to $T$ and $C$ to $G$.
The \hpc is best explained by Figure~\ref{fig:hpc}.
By a sequence $\ov w$ we always mean to read $w$ from right to left and to
complement base by base, i.e., $\ov{a_1 \cdots a_n} =  \ov{a_n} \cdots \ov {a_1}$.
During a chemical process, called annealing,
a strand which contains a sequence $\alp$ and ends on the complementary sequence
$\ov\alp$, Fig.~\ref{fig:strand},
can form an intramolecular base-pairing which is known as {\em hairpin}
(in case $\alp$ is not too short, say $\abs \alp\geq 10$), see Fig.~\ref{fig:hairpin}.
By complementing the unbound sequence $\gamma$, the {\em \hpc} arises, Fig.~\ref{fig:completion}.

\begin{figure}[h]
	\centering
	\subfigure[{strand}]
	{\label{fig:strand}
		\begin{tikzpicture}
			\draw [|-|] (0,0) .. controls +(30:.5) and +(210:.5) ..
				node [above,sloped] {$\gamma$} (1,0);
			\draw [-] (1,0) .. controls +(30:.25) and +(150:.25) .. 
				node [above,sloped] {$\alpha$} (1.5,0);
			\draw [|-|] (1.5,0) .. controls +(330:.5) and +(150:.5) ..
				node [above,sloped] {$\beta$} (2.5,0);
			\draw [-|] (2.5,0) .. controls +(330:.25) and +(210:.25) .. 
				node [above,sloped] {$\vphantom\gamma\ov\alpha$} (3,0);
			\node at (-.5,-.7) {};
			\node at (3.5,0) {};
		\end{tikzpicture}
	}
	\subfigure[hairpin]
	{\label{fig:hairpin}
		\begin{tikzpicture}
			\draw [|-] (0,0) -- node [above] {$\gamma$} (1,0);
			\draw [|-|] (1,0) -- node [above] {$\vphantom\gamma\alpha$} (1.5,0);
			\draw [-] (1.5,0) .. controls +(right:.25) and +(left:.25) .. (2,.25)
				.. controls +(right:.5) and +(right:.5) .. node [right] {$\beta$} (2,-.5)
				.. controls +(left:.25) and +(right:.25) .. (1.5,-.25);
			\draw [|-|] (1.5,-.25) -- node [below] {$\ov\alpha$} (1,-.25);
			\node at (0,-.7) {};
		\end{tikzpicture}	
	}
	\subfigure[\hpc]
	{\label{fig:completion}
		\begin{tikzpicture}
			\draw [|-] (0,0) -- node [above] {$\gamma$} (1,0);
			\draw [|-|] (1,0) -- node [above] {$\vphantom\gamma\alpha$} (1.5,0);
			\draw [-] (1.5,0) .. controls +(right:.25) and +(left:.25) .. (2,.25)
				.. controls +(right:.5) and +(right:.5) .. node [right] {$\beta$} (2,-.5)
				.. controls +(left:.25) and +(right:.25) .. (1.5,-.25);
			\draw [|-|] (1.5,-.25) -- node [below] {$\ov\alpha$} (1,-.25);
			\draw [-|] (1,-.25) -- node [below] {$\ov\gamma$} (0,-.25);
			\node at (-.5,-.7) {};
			\node at (2.9,0) {};
		\end{tikzpicture}	
	}
	\caption{Hairpin completion of a DNA-strand.}
	\label{fig:hpc}
\end{figure}
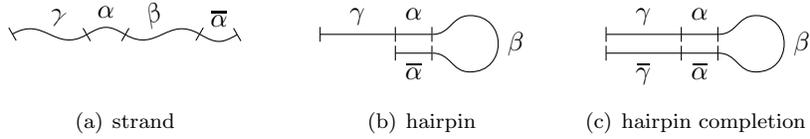

Hairpin completions of strands develop naturally during a
technique called {\em Polymerase Chain Reaction} (PCR).
The PCR is often used in DNA algorithms to amplify DNA strands with certain properties.
In many algorithms which use PCR the hairpin completions are by-products
which cannot be used for the subsequent computation.
Therefore, sets of strands which are unlikely to build hairpins (or lead to other {\em bad hybridizations})
have been examined in many papers, see e.g., \cite{garzon1,garzon2,garzon3,KariKLST05,KariMT07}.

On the other hand, some DNA-based computations rely on the fact that
DNA strands can form hairpins.
An example are algorithms using the {\em Whiplash PCR}
in which strands are designed to build hairpins.
This technique can be used to solve combinatorial problems,
including $\NP$-complete ones like \textsc{Satisfiability} and \textsc{Hamiltonian Path},
see \cite{Hagiya97,Sakamoto98,Winfree98}.

\bigskip

On an abstract level a strand can be seen as a word and a
(possibly infinite) set of strands is a language.
The \hpc of formal languages has been introduced in \cite{ChepteaMM06}
by Cheptea, Mart\'{\i}n-Vide, and Mitrana.
In several papers the \hpc and some familiar operations have been studied, see e.g.,
\cite{ChepteaMM06,ManeaMY10,ManeaMY09tcs,DBLP:conf/cie/ManeaM07,ManeaMM09,DiekertKM09}.
The focus of this paper is on closure properties of language classes
concerning the iterated versions of the \hpc and the \bhpc.
For the latter operation we assume the length of the $\gamma$-part to be bounded.
This variant of the hairpin completion 
was introduced and analyzed in \cite{ItoLM09,Ito2010}
by Ito, Leupold, Manea, and Mitrana.
A formal definition of both operations is given in Section~\ref{sec:def:hairpin}.

In \cite{ChepteaMM06} the closure properties of different
language classes under the non-iterated and iterated hairpin completion
have been analyzed. 
It follows that neither regular nor
context-free languages are closed under hairpin completion
whereas the family of context-sensitive languages
is closed under this function.
Actually, from \cite{ChepteaMM06} we can derive that 
the class $\mathrm{DSPACE}(f)$ (resp.\ the class $\mathrm{NSPACE}(f)$) is closed under  \hpc
(resp.\ closed under iterated \hpc) 
for every function $f\in \Omega(\log)$.
(By the class $\mathrm{DSPACE}(f)$ (resp.\ $\mathrm{NSPACE}(f)$)
we mean, as usual, the class of languages that can be accepted by
a deterministic (resp.\ non-deterministic) {T}uring  
machine which uses $f(n)$ work space on input length $n$.)
In particular, the class of context-sensitive languages is closed under iterated hairpin completion, too.
Furthermore, if we apply the iterated hairpin completion to a regular (resp. context-free) language
we stay inside \NL{}($=\mathrm{NSPACE}(\log)$)
(resp.\ $\mathrm{NSPACE}(\log^2)$, by Lewis, Stearns, and Hartmanis \cite{LewisSH65})
which is in terms of space complexity far below the class of deterministic context-sensitive languages. 

The situation changes if we consider the bounded hairpin completion,
which can be seen as a weaker variant of the hairpin completion.
All classes in the Chomsky Hierarchy are closed under bounded hairpin completion
and the classes of context-free, context-sensitive,
and recursively enumerable languages are closed under
the iterated operation, see \cite{ItoLM09,Ito2010}.
But the status for regular languages remained unknown and was stated as an
open problem in \cite{ItoLM09}.
In Section~\ref{sec:ibhpc} we solve this problem.
We state a general representation for the
\ibhpc of a formal language using the operations
union, intersection with regular sets, and concatenation with regular sets
(Theorem~\ref{main}).
As a consequence all language classes which are closed under these basic
operations are also closed under \ibhpc.

Furthermore, for a given non-deterministic finite automaton (NFA) accepting a language $L$,
we give exponential lower and upper bounds for  the size of an NFA accepting
the iterated bounded hairpin completion of $L$ in Theorem~\ref{thm:nfa}. 
Thus, if we ignore constants, the NFA leads us to a linear time  membership test
for the iterated bounded  hairpin completion of a fixed regular language.
This improves a quadratic bound which was known before. Indeed, 
the best known time complexity of
the membership problem for the iterated (unbounded) hairpin completion of a
regular language $L$ is still quadric time
by an algorithm from \cite{ManeaMM09}. See Section~\ref{sec:size} for a more detailed discussion.

The class of \ihpcs of singletons (\HCS) has been investigated in \cite{ManeaMY10}
by Manea, Mitrana, and Yokomori
(which is the journal version of a paper that appeared at AFL~2008).
Obviously, \HCS is included in the class of context-sensitive languages.
However, the questions if \HCS contains non-regular or non-context-free languages
has been unsolved.
In Section~\ref{sec:ihpc} we answer this question  by stating
a singleton whose iterated hairpin completion is not context-free.

This paper is the journal version of results which appeared as a poster
at DLT~2010, \cite{Kopecki10}.

\section{Definitions and Notation}

We assume the reader to be familiar with the fundamental concepts of formal
language and automata theory, see \cite{HU}.

An \emph{alphabet} is a finite set of \emph{letters}.
In this paper the alphabet is always $\Sigma$. 
The set of words over $\Sigma$ is denoted by $\Sigma^*$, as usual, and the
\emph{empty word} is denoted by $\e$.
We consider $\Sigma$ with an \emph{involution};
this is a bijection $\BAR: \Sigma\to \Sigma$ such that
$\overline{\ov{a}} = a$ for all letters $a \in \Sigma$
(in DNA-biochemistry: $\Sigma = \{A,C,G,T\}$ with $\ov A= T$ and $\ov C= G$).
We extend the involution to words $w = a_1 \cdots a_n $ 
by  $\ov w =  \ov{a_n} \cdots \ov{a_1}$. (Just like taking inverses in groups.)
For a formal language $L$ by $\ov L$ we denote the language $\set{\ov w}{w\in L}$.

Given a word $w$, we denote by $\abs w$ its length.
For a length bound $\ell\geq 0$ the set $\Sig^{\leq\ell}$ contains all words of length at most $\ell$.
If $w=xyz$ for some $x,y,z\in \Sigma^*$, then $x$, $y$, and $z$ are called 
{\em prefix}, {\em factor}, and {\em suffix} of the word $w$, respectively.
For the prefix relation we also use the notation $x \leq w$.
Note that if $z$ is a suffix of $w$, then $\ov z$ is a prefix of $\ov w$ (or $\ov z\leq \ov w$).

A common way to describe regular languages are {\em non-deterministic finite automata} (NFAs).
An NFA $A$ is a tuple $(Q,\Sig,E,I,F)$ where $Q$ is the {\em finite set of states},
$I\sse Q$ is the set of {\em initial states}, $F\sse Q$ is the set of {\em final states}, and
$E \sse Q\times \Sig \times Q$ is the set of {\em labelled edges} or {\em transitions}.
The language accepted by the automaton, denoted by $L(A)$, contains all words $w$
such that there is a path labelled by $w$ which leads from an initial state to a final state.
By the size of an NFA we mean the number of states $\abs Q$.

\subsection{The Hairpin Completion}\label{sec:def:hairpin}

Let $w\in\Sig^*$ be a word. If $w$ has a factorization $w=\gaba$,
it can form a hairpin and $\gabag$ is a {\em right hairpin completion} of $w$
(again, see Figure~\ref{fig:hpc}).
Since a hairpin in biochemistry is stable only if $\alp$ is long enough, we
fix a constant $k\geq 1$ and ask $\abs\alp = k$.
(Note that the definition does not change if we ask $\abs\alp \geq k$.)

Symmetrically, if $w$ has a factorization $\abag$ with $\abs\alp = k$,
then $\gabag$ is a {\em left hairpin completion} of $w$.
For the bounded hairpin completion we assume that the length of the factor $\gam$ is
bounded by some constant.

The hairpin completion of a formal language $L$ is the union of all hairpin completions
of all words in $L$.
Before we state the formal definition of the unbounded and bounded hairpin completion of a language,
we introduce a more general variant of the hairpin completion, namely the
{\em parameterized hairpin completion}.
The parameterized hairpin completion covers the other operations as special cases.

Let $\ell, r\in \N\cup \oneset\infty$ be two length bounds
and let $L$ be a formal language.
Considering a left hairpin completion with the factorization $\gabag$ as above,
then the bound $\ell$ limits the length of $\gamma$;
respectively, the bound $r$ limits the length of $\gamma$ in a right hairpin completion.
For a word $\alp\in \Sig^k$ the parameterized hairpin completion is defined as
\begin{gather*}
	\Ha(L,\ell,0) = \bigcup_{\gamma\in\Sig^{\leq\ell}} \gam \left( \alp\Sig^*\ov\alp \ov\gam \cap L\right) \\
	\Ha(L,0,r) = \bigcup_{\gamma\in\Sig^{\leq r}} \left( \gam\alp\Sig^*\ov\alp \cap L \right) \ov\gam \\
	\Ha(L,\ell,r) = \Ha(L,\ell,0) \cup \Ha(L,0,r).
\end{gather*}
For the constant $k$ we define
\begin{equation*}
	\Hk(L,\ell,r) = \bigcup_{\alp\in\Sig^k} \Ha(L,\ell,r).
\end{equation*}

In the unbounded case we distinguish two operations:
The {\em (two-sided) hairpin completion} is defined as $\twoH(L) = \Hk(L,\infty,\infty)$
and the {\em right-sided hairpin completion} is defined as $\oneH(L) = \Hk(L,0,\infty)$.
For the latter case we allow right hairpin completions, only.
In the same way we might define the left-sided hairpin completion of a 
language,
but for convenience we will treat the right-sided operation, only, and
also refer to it as the {\em one-sided \hpc}.
It is plain, that our results also hold for the left-sided case.

The {\em bounded hairpin completion} $\cH(L,m,m)$ arises if we choose
the same finite bound $m\in\N$ for left and right hairpin completions.

Note that if both bounds $\ell,r$ are finite and $L$ is regular, then
the parameterized hairpin completion $\Hk(L,\ell,r)$ is regular as well.
This does not hold if $\ell=\infty$ or $r=\infty$ as one of the unions becomes infinite.
It is known that the unbounded hairpin completion of a regular language is not necessarily
regular but always linear context-free, see e.g., \cite{ChepteaMM06}.

In this paper we examine the iterated versions of the operations we defined so far.
The iterated hairpin completion of a language $L$ contains all words which belong to a sequence
$w_0,\ldots, w_n$ where $w_0\in L$ and where
$w_i$ is a right or left hairpin completion of $w_{i-1}$ and the bound $r$ (resp.\ $\ell$) applies
for all $i$ such that $1\leq i\leq n$.
More formal, let $\ell,r\in \N\cup\oneset\infty$ and
\begin{align*}
	&  \Ha^0(L,\ell,r) = L,
	&& \Ha^i(L,\ell,r) = \Ha(\Ha^{i-1}(L,\ell,r),\ell,r), \\
	&  \Hk^0(L,\ell,r) = L,
	&& \Hk^i(L,\ell,r) = \Hk(\Hk^{i-1}(L,\ell,r),\ell,r)
\end{align*}
for $i\geq 1$.
The iterated parameterized hairpin completion of $L$ is the union
\begin{align*}
	&   \Has(L,\ell,r) = \bigcup_{i\geq 0} \Ha^i(L,\ell,r)
	&& \text{resp.}
	&& \Hks(L,\ell,r) = \bigcup_{i\geq 0} \Hk^i(L,\ell,r).
\end{align*}

If a word $z$ is included in $\Hk^i(\oneset{w},\ell,r)$,
we say $z$ is an $i$-iterated hairpin completion of $w$,
and if $z\in \Hks(\oneset{w},\ell,r)$, we say $z$ is an iterated hairpin completion of $w$.
(It will be clear from the context which length bounds apply.)

The iterated unbounded hairpin completions are denoted by
$\twoH^*(L) = \Hks(L,\infty,\infty)$ and
$\oneH^*(L) = \Hks(L,0,\infty)$.

\begin{example}
Figure~\ref{fig:ihpc} shows a $3$-iterated hairpin completion of
$\alp u \ov \alp v\alp$ where $\abs\alp = k$. 
In each step the dotted part is the newly created prefix or suffix.

\begin{figure}[h]
	\centering
	\begin{tikzpicture}
		\newcommand{\cut}{node [sloped] {{\tiny$\mid$}}}
		\newcommand{\shortcut}{node [sloped] {{\tiny$\mid$}}}
		\begin{scope}
			\draw												  (0,.1)			\cut
				--	node [above]		{$\alp$}						++(.3,0)			\shortcut
				--	node [above]		{$u$}						++(.5,0)			\cut
				--	node [above]		{$\ov\alp$}					++(.3,0)			\cut
				..	controls +(right:.2) and +(left:.2) ..		++(.3,.15)
					node [above]		{$v$}
				..	controls +(right:.3) and +(right:.3)	 ..		++(0,-.5)
				..	controls +(left:.2) and +(right:.2)..		++(-.3,.15)		\cut
				--	node [below]		{$\alp\vphantom{\ov a}$}		++(-.3,0)		\cut;
			\draw [dotted]										  (0,-.1)		\cut
				--	node [below]		{$\ov\alp$}					++(.3,0)			\shortcut
				--	node [below]		{$\ov u$}					++(.5,0);
		\end{scope}

		\begin{scope}[xshift=3.1cm]
			\draw												  (0,.1)			\cut
				--	node [above]		{$\alp$}						++(.3,0)			\shortcut
				--	node [above]		{$u$}						++(.5,0)			\shortcut
				--	node [above]		{$\ov\alp$}					++(.3,0)			\shortcut
				--	node [above]		{$v$}						++(.5,0)			\cut
				--	node [above]		{$\alp$}						++(.3,0)			\cut
				..	controls +(right:.2) and +(left:.2) ..		++(.3,.15)
					node [above]		{$\ov u$}
				..	controls +(right:.3) and +(right:.3)	 ..		++(0,-.5)
				..	controls +(left:.2) and +(right:.2)..		++(-.3,.15)		\cut
				--	node [below]		{$\ov\alp$}					++(-.3,0)		\cut;
			\draw [dotted]										  (0,-.1)		\cut
				--	node [below]		{$\ov\alp$}					++(.3,0)			\shortcut
				--	node [below]		{$\ov u$}					++(.5,0)			\shortcut
				--	node [below]		{$\alp\vphantom{\ov a}$}		++(.3,0)			\shortcut
				--	node [below]		{$\ov v$}					++(.5,0)	;
		\end{scope}

		\begin{scope}[xshift=7cm]
			\draw												  (0,-.1)		\cut
				--	node [below]		{$\ov\alp$}					++(.3,0)			\shortcut
				--	node [below]		{$\ov u$}					++(.5,0)			\shortcut
				--	node [below]		{$\alp\vphantom{\ov a}$}		++(.3,0)			\shortcut
				--	node [below]		{$\ov v$}					++(.5,0)			\cut
				--	node [below]		{$\ov \alp$}					++(.3,0)			\cut
				..	controls +(right:.2) and +(left:.2) ..		++(.3,-.15)
				--												++(.5,0)
				..	controls +(right:.3) and +(right:.3)	 ..		++(0,.5)
				--	node [above]		{$u\ov\alp v\alp\ov u$}		++(-.5,0)
				..	controls +(left:.2) and +(right:.2)..		++(-.3,-.15)		\cut
				--	node [above]		{$\alp$}						++(-.3,0)		\cut;
			\draw [dotted]										  (0,.1)			\cut
				--	node [above]		{$\alp$}						++(.3,0)			\shortcut
				--	node [above]		{$u$}						++(.5,0)			\shortcut
				--	node [above]		{$\ov \alp$}					++(.3,0)			\shortcut
				--	node [above]		{$v$}						++(.5,0)	;
		\end{scope}
	\end{tikzpicture}
	\caption{Example for the \ihpc.}
	\label{fig:ihpc}
\end{figure}
\end{example}

\section{The Iterated Bounded Hairpin Completion}
\label{sec:ibhpc}

In this section we will give a general representation for the \iphpc
with finite bounds.
Our main result is the proof of the following theorem which can be found in Section~\ref{sec:mainproof}.

\begin{theorem}\label{main}
	Let $L$ be a formal language and $\ell,r\in \N$.
	The \iphpc $\Hks(L,\ell,r)$ can be effectively represented by an expression using
	$L$ and the operations union, intersection with regular sets, and concatenation with regular sets.
\end{theorem}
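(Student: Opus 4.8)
The key observation is that a single bounded hairpin completion step only changes a word near its two ends: it either prepends some $\gam\alp$-shaped prefix (where $|\gam|\le \ell$ and $\alp$ is the already-present length-$k$ factor right after the old prefix $\gam$), or appends some $\ov\alp\ov\gam$-shaped suffix symmetrically. Since the total length added per step is bounded by $\ell+k$ (resp. $r+k$), the words we ``touch'' in the middle never grow; only the prefixes and suffixes grow. This suggests that $\Hks(L,\ell,r)$ should have the shape $P \cdot M \cdot S$ where $M$ is essentially a bounded-length ``core'' carved out of $L$ by intersecting with a regular set, $P$ is a regular set of admissible prefixes, and $S$ is a regular set of admissible suffixes. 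The plan is to make this precise.

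First I would set up the bookkeeping. Fix $m=\max(\ell,r)$ and a bound $b$ (roughly $m+k$) on how much a single step adds to one side. For a word $w$, look at its prefix of length, say, $2b$ and its suffix of length $2b$; these ``collars'' are all that matter for deciding which hairpin completions are applicable. I would introduce a finite set of ``types'' recording the relevant left and right collars (together with whether a word is short enough that the two collars overlap — those short words must be handled separately as a finite special case or folded into the core). Then I would define a finite directed labelled structure — essentially an NFA-like object — whose vertices are these types and whose edges record ``from a word with left/right collar of this type, one hairpin step produces a word whose new collar is of that type, having prepended/appended the string on the edge.'' Iterating hairpin completion then corresponds to taking all paths in this structure, so the set of all prefixes ever prepended to words with a given starting right-collar type forms a \emph{regular} language (it is the language of an NFA reading those edge labels), and symmetrically for suffixes.

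Second, I would decompose $\Hks(L,\ell,r)$ accordingly. For each pair $(\sigma,\tau)$ of a left-collar type and a right-collar type, let $C_{\sigma,\tau}\sse L$ be the set of words in $L$ whose left collar has type $\sigma$ and whose right collar has type $\tau$; this is $L$ intersected with a regular set (a union of $\Sig^{2b}\alpha\Sig^*$-style cylinders, finitely many of them, intersected). Let $P_\sigma$ be the regular language of all prefixes that can be accumulated on the left starting from collar-type $\sigma$ (read off from the structure above, restricting to left steps and tracking how each left step updates the left collar), and let $S_\tau$ be the analogous regular language of accumulated suffixes on the right. Because left completions only ever inspect and modify the left collar, and right completions only the right collar, the two processes are independent, so
\begin{equation*}
	\Hks(L,\ell,r) \;=\; \bigcup_{\sigma,\tau}\; P_\sigma \cdot C_{\sigma,\tau}\cdot S_\tau
\end{equation*}
(together with the finitely many genuinely short words handled directly). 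Each term is (regular)$\cdot$($L\cap$ regular)$\cdot$(regular), so the whole expression uses only union, intersection with regular sets, and concatenation with regular sets, as required. Effectiveness is clear since everything is built from $\ell,r,k$ and the finite collar alphabet.

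The main obstacle — and the step I'd spend the most care on — is the \emph{independence} claim and the correctness of the collar abstraction: I must verify that knowing only the length-$2b$ prefix and suffix (plus the ``short word'' escape hatch) really determines the set of available left- and right-completion steps and how those steps transform the collars, and that left steps never interfere with the right collar or vice versa once a word is long enough. One has to check the boundary case where a word is short enough that a few hairpin steps could bridge the two collars; I would argue that after finitely many steps every reachable word is long (each step adds at least... well, at least $k$, if $\alp$ is genuinely new — or one shows the short words form a finite set closed under the relevant analysis), so the short-word contribution is a finite union absorbed into the $C_{\sigma,\tau}$ terms. Getting the collar length $b$ large enough that one step's modification of the left collar is itself recorded faithfully in the next type (so the NFA transitions compose correctly) is the fiddly but routine part.
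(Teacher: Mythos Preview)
Your independence claim is the wrong way round, and that is a genuine gap. A \emph{right} hairpin completion of $w=\gam\alp\beta\ov\alp$ appends $\ov\gam$, where $\gam$ (with $\abs\gam\le r$) is read from the \emph{prefix} of $w$; symmetrically, a \emph{left} completion prepends $\gam$, where $\ov\alp\ov\gam$ is a \emph{suffix} of $w$. So right steps inspect the left collar and modify the right collar, and left steps inspect the right collar and modify the left collar. The two processes are cross-coupled: what you append on the right determines what you may next prepend on the left, and vice versa. Your NFA on collar pairs only gives you a rational \emph{relation} between accumulated prefixes and suffixes, not a product $P_\sigma\times S_\tau$; taking the two projections separately and writing $\bigcup_{\sigma,\tau} P_\sigma\cdot C_{\sigma,\tau}\cdot S_\tau$ will in general over-generate, and nothing in your outline rules this out. (Also, a step adds $\abs\gam\ge 0$, not $k$; the factor $\alp$ is already present in $w$.)

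The point of the paper's proof is precisely that this relation \emph{does} decompose into a finite union of products, but establishing that requires an extra idea. After restricting to $L_\alp=\Hk(L,\ell,r)\cap\alp\Sig^*\ov\alp$ for a fixed $\alp$, the paper argues by induction on $\ell+r$: the first time a step of the current maximal length $\ell$ is taken (say a left step prepending $v\in\Sig^\ell$), the word acquires prefix $v\alp$ \emph{and} suffix $\ov\alp\ov v$, and from then on every $x_i$ prepended and every $y_i$ appended lies in $\Pa(v\alp,\ell)^*$ respectively $\Pa(v\alp,r)^*$ (this is Lemma~\ref{lemon}, using the order-structure of $\alp$-prefixes from Lemma~\ref{lem:prefixes}). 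That common parameter $v$ is what synchronises the two ends and yields the product form $\Pa(v\alp,\ell)^*\,v\,M\,\ov{\Pa(v\alp,r)}^{\,*}$; a separate lemma (Lemma~\ref{lem:final}) checks the reverse inclusion. Your collar abstraction is a reasonable starting picture, but you still need this synchronisation argument (or an equivalent) to pass from ``rational relation'' to ``finite union of products''.
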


Consequentially, all language classes which are closed under these operations
are also closed under \iphpc with finite bounds,
and if the closure under all three operations is effective, then the closure
under \iphpc with finite bounds is effective, too;
this applies to all four Chomsky classes.
From \cite{ItoLM09} it is known that the classes of context-free, context-sensitive,
and recursively enumerable languages are closed under \ibhpc,
but the status for regular languages was unknown.
Since the \ibhpc is a special case of the \iphpc with finite bounds
we can answer this question now.

\begin{corollary}
	Let $\cC$ be a class of languages.
	If $\cC$ is closed under union, intersection with regular sets,
	and concatenation with regular sets, then $\cC$ is also closed
	under iterated bounded hairpin completion.
	Moreover, if $\cC$ is effectively closed under union, intersection with regular sets,
	and concatenation with regular sets, then
	the closure under iterated bounded hairpin completion is effective.
	
	In particular, the class of regular languages is effectively closed under \ibhpc.
\end{corollary}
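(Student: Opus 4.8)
The \ibhpc is an iterated parameterized hairpin completion in which both length bounds are finite, so Theorem~\ref{main} applies almost verbatim; the corollary is essentially a reading of that theorem through the hypotheses on $\cC$. Concretely, the \bhpc with bound $m$ is $\Hk(L,m,m)$ and its iterate is $\Hks(L,m,m)$ with $m\in\N$, which is precisely the \iphpc $\Hks(L,\ell,r)$ for the finite choice $\ell=r=m$. Hence the first step of the plan is simply to record this identification, so that Theorem~\ref{main} can be invoked directly with $\ell=r=m$.

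Next I would invoke Theorem~\ref{main} to obtain, for this choice of bounds, an effective expression for $\Hks(L,m,m)$ built from the single seed language $L$ using only union, intersection with regular sets, and concatenation with regular sets. The key observation -- and really the only content of the non-effective part of the corollary -- is that the only non-regular ingredient of this expression is $L$ itself: every other building block is a regular set, and the only combinators are the three operations named in the hypothesis. Therefore, if $\cC$ is closed under these three operations and $L\in\cC$, a straightforward induction on the structure of the expression shows that every subexpression evaluates to a language in $\cC$, so in particular $\Hks(L,m,m)\in\cC$. This establishes closure of $\cC$ under \ibhpc.

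For the effective statement I would note that Theorem~\ref{main} guarantees the expression itself can be computed from a description of $L$ and the bound $m$, and that an effective closure of $\cC$ under the three operations lets us turn this expression into a description of a language in $\cC$ by evaluating it bottom-up: starting from the given description of $L$ and the (computable) regular sets occurring in the expression, each application of one of the three operations is carried out by the assumed effective closure procedure. Composing these finitely many effective steps yields an effective procedure for \ibhpc.

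Finally, the ``in particular'' clause follows by checking that the regular languages satisfy the hypotheses \emph{effectively}: they are effectively closed under union, under intersection (hence under intersection with regular sets), and under concatenation (hence under concatenation with regular sets), by the standard product and nondeterministic automaton constructions. I do not anticipate a genuine obstacle here, since Theorem~\ref{main} does all the structural work; the only point requiring care is to confirm that the representation promised by the theorem really uses no operation beyond the three listed and introduces no non-regular language other than $L$, so that the closure hypotheses on $\cC$ genuinely suffice.
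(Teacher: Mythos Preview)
Your proposal is correct and matches the paper's approach: the paper states this corollary as an immediate consequence of Theorem~\ref{main} without a separate proof, noting only that any class closed under the three operations is therefore closed under iterated (parameterized, hence bounded) hairpin completion, and that effectiveness transfers. Your write-up simply spells out the obvious structural induction on the expression from Theorem~\ref{main}, which is exactly what the paper leaves implicit.
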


The next two sections are devoted to the proof of Theorem~\ref{main}.
First we introduce the important concept of {\em $\alp$-prefixes}.

\subsection{$\alp$-Prefixes}

Let $\alp$ be a word of length $k$. For $v,w\in\Sigma^*$ we say
$v$ is an $\alp$-prefix of $w$ if $v\alp \leq w$.
We denote the set of all $\alp$-prefixes of length at most $\ell$ by
\begin{equation*}
	\Pa(w,\ell) = \set{v}{v\alp \leq w\wedge \abs v\leq \ell}.
\end{equation*}

The idea behind this notation is: For a word $w\in\alp\Sig^*\ov\alp$ with $\abs w-k \geq \ell,r$,
the set of (non-iterated) parameterized hairpin completions of $w$ is given by
\begin{align*}
	&  \Ha(\oneset{w},\ell,0) = \Pa(\ov w,\ell)w
	&& \text{and}
	&& \Ha(\oneset{w},0,r) = w\ov{\Pa(w,r)}.
\end{align*}

In the following proof we are interested in $\alp$-prefixes of words which have $\alp$ as a prefix.
This leads to some useful properties.

\begin{lemma}\label{lem:prefixes}
	Let $\alp\in\Sig^k$, $\ell\in\N$, and $w\in\alp\Sig^*$.
	\begin{enumerate}
		\item For all $v\in\Pa(w,\ell)$ we have $\alp\leq v\alp$. 
		\item For all $u,v\in\Pa(w,\ell)$ we have
			\begin{equation*}
				\abs{u} \leq \abs{v}	\ \Leftrightarrow\  u\alp \leq v\alp
				\ \Leftrightarrow\  u\in\Pa(v\alp,\ell).
			\end{equation*}
		\item If $v\alp$ is a prefix of some word in $\Pa(w,\ell)^*\alp$, then
			$v\in\Pa(w,\ell)^*$. 
	\end{enumerate}
\end{lemma}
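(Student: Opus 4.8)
The plan is to prove the three items in order, each building on the previous one, using the key fact that all elements of $\Pa(w,\ell)$ are $\alp$-prefixes of the \emph{same} word $w$, which forces them to be totally ordered by the prefix relation.

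\textbf{Item 1.} Let $v\in\Pa(w,\ell)$, so $v\alp\leq w$. Since $w\in\alp\Sig^*$, the word $\alp$ is a prefix of $w$ of length $k=\abs\alp$. Also $v\alp$ is a prefix of $w$ of length $\abs v+k\geq k$. Two prefixes of the same word are comparable, and the shorter one ($\alp$) is a prefix of the longer one ($v\alp$); hence $\alp\leq v\alp$. (This is just the statement that $\alp$ occurs as a prefix of $v\alp$, even though $v$ need not start with $\alp$ — the overlap structure is what matters later.)

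\textbf{Item 2.} Take $u,v\in\Pa(w,\ell)$, so $u\alp\leq w$ and $v\alp\leq w$; again both are prefixes of $w$, hence comparable. If $\abs u\leq\abs v$ then $\abs{u\alp}\leq\abs{v\alp}$, and two comparable words with this length inequality satisfy $u\alp\leq v\alp$; conversely $u\alp\leq v\alp$ immediately gives $\abs u\leq\abs v$. This proves the first equivalence. For the second, $u\in\Pa(v\alp,\ell)$ means by definition $u\alp\leq v\alp$ and $\abs u\leq\ell$; the length bound $\abs u\leq\ell$ holds because $u\in\Pa(w,\ell)$, so $u\in\Pa(v\alp,\ell)\Leftrightarrow u\alp\leq v\alp$, closing the chain. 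All three conditions are thus equivalent.

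\textbf{Item 3.} This is the main obstacle: we must show that being a prefix of a \emph{concatenation} of $\alp$-prefixes (followed by $\alp$) forces $v$ itself to factor as such a concatenation. Suppose $v\alp$ is a prefix of $v_1v_2\cdots v_m\alp$ with each $v_i\in\Pa(w,\ell)$. I would argue by induction on $m$ (or on $\abs v$). If $\abs v\leq\abs{v_1}$: then $v$ is a prefix of $v_1$, and since $v\alp$ is also a prefix of $v_1\cdots v_m\alp$ while $\alp\leq w$ gives $\abs{v_1\cdots v_m\alp}\geq\abs{v_1}+k\geq\abs v+k=\abs{v\alp}$... here I need to be careful. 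The clean approach: since $v\alp\leq v_1\cdots v_m\alp$ and (by Item 1 applied to $v_1$, noting $v_1\in\alp\Sig^*$? — actually $v_1\in\Pa(w,\ell)$ with $w\in\alp\Sig^*$ so $\alp\leq v_1\alp$ and $v_1\alp\leq w$ hence $v_1\alp\in\alp\Sig^*$, but $v_1$ itself may not start with $\alp$), compare $\abs v$ with $\abs{v_1}$. If $\abs{v_1}\leq\abs v$: then $v_1\leq v$ (both are prefixes of the same word $v_1\cdots v_m\alp$, using $v\alp$'s prefix property and $\abs{v_1}\leq\abs v\leq\abs{v\alp}$), write $v=v_1v'$; then $v'\alp$ is a prefix of $v_2\cdots v_m\alp$, and by induction $v'\in\Pa(w,\ell)^*$, so $v=v_1v'\in\Pa(w,\ell)^*$. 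If $\abs v<\abs{v_1}$: then $v\leq v_1$, so $v\alp\leq v_1\alp$ provided $\abs{v\alp}\leq\abs{v_1\alp}$, which holds; but we also know $v_1\alp\leq w$, so $v\alp\leq w$ with $\abs v\leq\abs{v_1}\leq\ell$, giving $v\in\Pa(w,\ell)$, and in particular $v\in\Pa(w,\ell)^*$. The base case $m=0$ (so $v\alp\leq\alp$, i.e.\ $v=\e$) is trivial since $\e\in\Pa(w,\ell)^*$. The subtle point to check carefully is the comparison step claiming $v_1\leq v$ or $v\leq v_1$: this needs that both $v$ and $v_1$ are prefixes of a common word — $v$ is a prefix of $v\alp$ which is a prefix of $v_1\cdots v_m\alp$, and $v_1$ is a prefix of $v_1\cdots v_m\alp$, so both are prefixes of $v_1\cdots v_m\alp$ and hence comparable, and the shorter is a prefix of the longer. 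I expect this inductive bookkeeping — keeping the length bounds $\abs{v'}\leq\ell$ alive through the induction — to be the only place requiring genuine care; everything else is the standard observation that prefixes of a fixed word form a chain.
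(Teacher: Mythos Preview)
Your approach is essentially the same as the paper's: Items~1 and~2 follow from the fact that prefixes of a fixed word form a chain, and Item~3 is handled by an induction that peels off the leading factor and reduces to a shorter product. The paper does the same thing non-inductively, writing $v = x_1\cdots x_{i-1}y$ with $y\leq x_i$ in one stroke, but this is just your induction unrolled.

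There is one real gap in your Item~3, in the case $\abs v < \abs{v_1}$. You write ``$v\leq v_1$, so $v\alp\leq v_1\alp$ provided $\abs{v\alp}\leq\abs{v_1\alp}$'', but $v\leq v_1$ does \emph{not} by itself imply $v\alp\leq v_1\alp$ (take $v=a$, $v_1=ab$, $\alp=c$). What you need is that $v\alp$ and $v_1\alp$ are both prefixes of the \emph{same} word $v_1\cdots v_m\alp$; you already know this for $v\alp$, but for $v_1\alp$ you must show $\alp\leq v_2\cdots v_m\alp$. That follows by iterating Item~1 (since each $v_j\in\Pa(w,\ell)$ gives $\alp\leq v_j\alp$), and is exactly the step the paper flags as ``by property~1 and induction, $\alp$ is a prefix of $x_{i+1}\cdots x_m\alp$''. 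Once you add this observation, your argument goes through and matches the paper's.
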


\begin{proof}
	If two words $x$, $y$ are prefixes of $w$ and $\abs x \leq \abs y$, then $x\leq y$.
	This yields properties~1 and 2.
	
	For property~3 let $v\alp \leq x_1\cdots x_m\alp$ where $x_1,\ldots,x_m\in\Pa(w,\ell)$.
	We can factorize $v = x_1\cdots x_{i-1} y$ such that $y\leq x_i$
	for some $i$ with $1\leq i \leq m$. By property~1 and induction, we see that
	$\alp$ is a prefix of $x_{i+1}\cdots x_m\alp$ and hence $y\alp\leq x_i\alp \leq w$
	which implies $y\in\Pa(w,\ell)$ and, moreover, $v\in\Pa(w,\ell)^*$.
\end{proof}

\subsection{Proof of Theorem~\ref{main}}\label{sec:mainproof}

Let $L$ be a formal language and $\ell, r\in \N$. 
We will state a representation for $\Hks(L,\ell,r)$ using $L$ and the operations union,
intersection with regular sets, and concatenation with regular sets.

Let us begin with a basic observation.
Every word $w$ which is a hairpin completion of some other word has
a factorization $w=\delta\beta\ov\delta$ with $\abs\delta \geq k$, therefore,
the prefix of $w$ of length $k$ and the suffix of $w$ of length $k$ are complementary.
Let us call this prefix $\alp$, hence, we have $w\in\alp\Sig^*\ov\alp$.
Every word which is a right hairpin completion of $w$ has still the prefix $\alp$
and since the suffix of length $k$ is complementary, it has the suffix $\ov\alp$ as well.
For left hairpin completions we have a symmetric argument and, by induction,
every word which is an iterated hairpin completion of $w$ has prefix $\alp$ and suffix $\ov\alp$.

Thus, we can split up the (non-iterated) parameterized hairpin completion $\Hk(L,\ell,r)$
into finitely many languages
$L_\alp = \Hk(L,\ell,r) \cap \alp\Sig^*\ov\alp$ where $\alp\in\Sig^k$, and each of them
has a effective representation using $L$ and the operations union, intersection with regular sets, and
concatenation with regular sets.
Moreover,
\begin{equation*}
	\Hks(L_\alp,\ell,r) = \Has(L_\alp,\ell,r) \sse \alp\Sig^*\ov\alp
\end{equation*}
and the iterated parameterized hairpin completion equals
\begin{align*}
	\Hks(L,\ell,r) &= L \cup \Hks(\Hk(L,\ell,r),\ell,r) \\
	&= L \cup \Hks\Bigl(\bigcup_{\alp\in\Sig^k}L_\alp,\ell,r\Bigr) \\
	&= L \cup \bigcup_{\alp\in\Sig^k} \Has(L_\alp,\ell,r).
\end{align*}

Henceforth, let $\alp\in\Sig^k$ be fixed.
In order to prove Theorem~\ref{main} we will state a suitable representation for
$\Has(L_\alp,\ell,r)$.
For the rest of the proof we will heavily rely on the fact that every word in
$\Has(L_\alp,\ell,r)$ has the prefix $\alp$ and the suffix
$\ov\alp$.
The representation is defined recursively.
We have
\begin{equation*}
	\Has(L_\alp,0,0) = L_\alp.
\end{equation*}

By symmetry, we may assume that $\ell\geq r$ and $\ell \geq 1$.
We will state a representation for $\Has(L_\alp,\ell,r)$
using $\Has(L_\alp,\ell-1,r)$ and the operations union, intersection with regular sets, and
concatenation with regular sets.
Therefore, consider a word 
\begin{equation*}
	z \in \Has(L_\alp,\ell,r)\setminus \Has(L_\alp,\ell-1,r).
\end{equation*}
For some $n\geq 1$ there is a sequence $w_0,\ldots,w_n=z$ where $w_0\in L_\alp$
and for all $i$ such that $1\leq i\leq n$
either $w_i$ is a left hairpin completion of $w_{i-1}$ and $\abs{w_i}\leq \abs{w_{i-1}}+\ell$
or $w_i$ is a right hairpin completion of $w_{i-1}$ and $\abs{w_i}\leq \abs{w_{i-1}}+r$.
Furthermore, there is an index $j \geq 1$ such that $w_{j-1} = w \in \Has(L_\alp,\ell-1,r)$
and $w_{j} = vw \notin \Has(L_\alp,\ell-1,r)$.
Note that this implies $\abs v = \ell$ and $w\in\alp\Sig^*\ov\alp\ov v$.
Let $s = n-j$ and consider the factorization
\begin{equation*}
	z = x_s\cdots x_1 v w \ov{y_1} \cdots \ov{y_s}
\end{equation*}
where $x_i\cdots x_1 v w \ov{y_1} \cdots \ov{y_i} = w_{j+i}$ and either
\begin{enumerate}
	\item $y_i = \e$, $\abs{x_i}\leq \ell$, and
		$x_i\alp \leq y_{i-1}\cdots y_1 v\alp$ or 
	\item $x_i = \e$, $\abs{y_i}\leq r$, and
		$y_i\alp \leq x_{i-1}\cdots x_1 v\alp$. 
\end{enumerate}
for all $i$ such that $0\leq i\leq s$.

The crucial point is that $v w$ has the prefix $v\alp$, the suffix $\ov\alp\ov v $,
and $\abs{v} = \ell \geq r$. Therefore, the factors $x_1,\ldots,x_s$ and $y_1,\ldots,y_s$ are
controlled by the triple $(v,\ell,r)$ in the following way.

\begin{lemma}\label{lemon}
	$x_i \in \Pa(v\alp,\ell)^*$ and $y_i \in \Pa(v\alp,r)^*$ for all $i$ such that $1\leq i\leq s$.
\end{lemma}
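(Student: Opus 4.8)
The plan is to prove both claims simultaneously by induction on $i$, exploiting the description of the sequence $w_j,\dots,w_n$ together with Lemma~\ref{lem:prefixes}. The key observation is that $vw$ has prefix $v\alp$ and suffix $\ov\alp\ov v$ with $\abs v=\ell\geq r$, so from step $j$ onward every $w_{j+i}$ still has prefix $v\alp$ (for left completions, a newly attached prefix $x_i$ with $x_i\alp\leq\text{(current prefix)}$ leaves $v\alp$ as a prefix since $\abs{x_i}\le\ell$ and the $\alp$-prefix structure is nested) and, symmetrically, suffix $\ov\alp\ov v$ (the bound $r$ guarantees a right completion cannot eat past the $\ov v$ block). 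I would first record this invariant: for all $i$ with $0\le i\le s$, $w_{j+i}\in v\alp\Sig^*\ov\alp\ov v$, which is what makes the triple $(v,\ell,r)$ the relevant control object.

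The inductive step for the $x_i$'s: suppose $x_{i-1}\cdots x_1\in\Pa(v\alp,\ell)^*$ and similarly for the $y$'s (the base case $i=1$ being immediate since then the hairpin completion factor satisfies $x_1\alp\leq v\alp$ directly, or $x_1=\e$). In case~1 we have $x_i\alp\leq y_{i-1}\cdots y_1 v\alp$. By the suffix invariant, $\ov\alp\ov v$ is a suffix of $w_{j+i-1}$, equivalently $v\alp$ is a prefix of $\ov{w_{j+i-1}}$; combined with the induction hypothesis $y_{i-1}\cdots y_1\in\Pa(v\alp,r)^*$ and Lemma~\ref{lem:prefixes} we know $y_{i-1}\cdots y_1 v\alp$ is a prefix of a word in $\Pa(v\alp,r)^*\alp\subseteq\Pa(v\alp,\ell)^*\alp$ (using $r\le\ell$ so $\Pa(v\alp,r)\subseteq\Pa(v\alp,\ell)$). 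Then property~3 of Lemma~\ref{lem:prefixes}, applied with $w\mapsto v\alp$, yields $x_i\in\Pa(v\alp,\ell)^*$, completing the induction; case~2 is symmetric with $\ell$ and $r$ swapped, giving $y_i\in\Pa(v\alp,r)^*$.

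The main obstacle I expect is making the invariant "prefix $v\alp$, suffix $\ov\alp\ov v$ is preserved" fully rigorous — in particular checking that the length bounds ($\abs{x_i}\le\ell$ on the left, $\abs{y_i}\le r$ on the right, together with $\ell\ge r$) are exactly what prevent a completion on one side from overrunning the distinguished block $v$ that was attached at step $j$. This requires noting that $w=w_{j-1}$ already lies in $\alp\Sig^*\ov\alp\ov v$ (so $vw\in v\alp\Sig^*\ov\alp\ov v$), and that a right completion adds at most $r\le\ell=\abs v$ letters, hence cannot reach back to disturb the prefix $v\alp$, and dually for left completions and the suffix; one also uses Lemma~\ref{lem:prefixes}(1) repeatedly to see that the accumulated prefix stays of the form (element of $\Pa(v\alp,\ell)^*$) followed by $v\alp$. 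Once this bookkeeping is in place, the application of Lemma~\ref{lem:prefixes}(3) is the clean engine that closes the induction.
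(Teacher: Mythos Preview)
Your approach is the same as the paper's---induction on $i$ with Lemma~\ref{lem:prefixes}(3) as the engine---but two points need correction.

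First, the prefix/suffix invariant you plan to establish is not needed: the relations $x_i\alp\leq y_{i-1}\cdots y_1 v\alp$ (case~1) and $y_i\alp\leq x_{i-1}\cdots x_1 v\alp$ (case~2) are already part of the hypotheses recorded immediately before the lemma. That is precisely where the length comparisons $\abs{x_i}\le\ell=\abs v$ and $\abs{y_i}\le r\le\abs v$ get used, and the paper has absorbed this into the setup; you do not have to redo it.

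Second, and more importantly, case~2 is \emph{not} symmetric to case~1. In case~1 you use $\Pa(v\alp,r)\subseteq\Pa(v\alp,\ell)$ (valid since $r\le\ell$) together with $v\in\Pa(v\alp,\ell)$ to place $y_{i-1}\cdots y_1 v\alp$ inside $\Pa(v\alp,\ell)^*\alp$. Swapping $\ell$ and $r$ would require $\Pa(v\alp,\ell)\subseteq\Pa(v\alp,r)$ and $v\in\Pa(v\alp,r)$, both of which fail when $r<\ell$. The paper's remedy is to first apply Lemma~\ref{lem:prefixes}(3) with parameter $\ell$, obtaining $y_i\in\Pa(v\alp,\ell)^*$, and then observe that $\abs{y_i}\le r$ forces every factor in such a decomposition to have length at most $r$, whence $y_i\in\Pa(v\alp,r)^*$. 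Without this extra step your case~2 does not close.
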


\begin{proof}
	We prove the claim by induction on $i$.
	Let $i$ such that $1\leq i \leq s$.
	Our induction hypothesis is $x_j \in \Pa(v\alp,\ell)^*$
	and $y_j \in \Pa(v\alp,r)^*$ for all $j$ such that $1\leq j < i$.
	We distinguish between the two cases above:
	
	\begin{enumerate}
		\item We have $y_i = \e \in \Pa(v\alp,r)^*$ and, by induction hypothesis,
			\begin{equation*}
				x_i\alp \leq y_{i-1}\cdots y_{1}v\alp \in \Pa(v\alp,r)^*v\alp \sse \Pa(v\alp,\ell)^*\alp.
			\end{equation*}
			In combination with Lemma~\ref{lem:prefixes} this yields $x_i\in\Pa(v\alp,\ell)^*$.
		\item We have $x_i = \e \in \Pa(v\alp,\ell)^*$ and
			\begin{equation*}
				y_i\alp \leq x_{i-1}\cdots x_{1}v\alp \in \Pa(v\alp,\ell)^*\alp,
			\end{equation*}
			hence $y_i\in\Pa(v\alp,\ell)^*$. Since $\abs{y_i}\leq r$, all factors of $y_i$
			are at most of length~$r$, too, and $y_i\in\Pa(v\alp,r)^*$. \qedhere
	\end{enumerate}
\end{proof}

For $u\in\Sig^\ell$ let us define the language
\begin{equation*}
	\aL(u,\ell,r) = \Pa(u\alp,\ell)^* u \left( \Has(L_\alp,\ell-1,r) \cap \alp\Sig^*\ov\alp\ov u\right)
		\ov{\Pa(u\alp,r)}^*.
\end{equation*}
Note that, by induction,
for every $u$ the representation for $\aL(u,\ell,r)$ is effectively given.
By Lemma~\ref{lemon}, the word $z$ is included in $\aL(v,\ell,r)$ and for every word
$z'\in\Has(L_\alp,\ell,r)\setminus\Has(L_\alp,\ell-1,r)$
it exists
$v'\in\Sig^\ell$ such that $z'\in\aL(v',\ell,r)$. Therefore,
\begin{equation*}
	\Has(L_\alp,\ell,r) \sse \Has(L_\alp,\ell-1,r) \cup \bigcup_{u\in \Sig^\ell} \aL(u,\ell,r)
\end{equation*}
and for the right hand side we have an effective representation.
Of course, we intend to replace the inclusion by an equality sign.

\begin{lemma}\label{lem:final}
$\aL(u,\ell,r) \sse \Has(L_\alp,\ell,r)$ for all $u\in\Sig^\ell$.
\end{lemma}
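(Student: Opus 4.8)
The goal is the reverse inclusion $\aL(u,\ell,r)\sse\Has(L_\alp,\ell,r)$, which upgrades the set inclusion for $\Has(L_\alp,\ell,r)$ to an equality and thereby closes the recursion. Fix $u\in\Sig^\ell$ and take an arbitrary word
\begin{equation*}
	z = p_1\cdots p_m\, u\, w\, \ov{q_1}\cdots \ov{q_{m'}} \in \aL(u,\ell,r),
\end{equation*}
where $p_1,\ldots,p_m\in\Pa(u\alp,\ell)$, $q_1,\ldots,q_{m'}\in\Pa(u\alp,r)$, and $w\in\Has(L_\alp,\ell-1,r)\cap\alp\Sig^*\ov\alp\ov u$. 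I would first handle the base of the factorization: since $w$ has suffix $\ov\alp\ov u$, reading $u$ off the front of the suffix $\ov u$ is exactly a right hairpin completion bounded by $\abs u=\ell\geq r$ — wait, $u$ has length $\ell$, so instead I append $u$ as a \emph{left} hairpin completion. Concretely $w=\alp\Sig^*\ov\alp\ov u$ has prefix $\alp$ and a factorization witnessing that $u\alp$ is itself a prefix of $w$ only after one step; the clean statement is: $uw$ is obtained from $w$ by a left hairpin completion of length $\abs u=\ell$, because $w\in\alp\Sig^*\ov\alp\ov u$ means $w=\alp\beta\ov\alp\ov u$ for suitable $\beta$, and $u\,w = u\alp\beta\ov\alp\ov u = u\alp\beta\ov{u\alp}$ up to matching the $\ov\alp\ov u=\ov{u\alp}$ factor. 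Hence $uw\in\Ha^1(\{w\},\ell,0)\sse\Has(L_\alp,\ell,r)$.

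\textbf{The inductive core.} Next I would show that each $p_i$ can be prepended and each $q_i$ appended, one at a time, staying inside $\Has(L_\alp,\ell,r)$. The key invariant: after prepending $p_1,\ldots,p_a$ and appending $q_1,\ldots,q_b$ in \emph{any} order, the current word $w_{a,b}=p_a\cdots p_1\,u\,w\,\ov{q_1}\cdots\ov{q_b}$ has prefix $p_a\alp$ (if $a\geq 1$, else $u\alp$) and suffix $\ov{q_b}\,\ov\alp$... more precisely it lies in $(p_a\alp\text{ or }u\alp)\,\Sig^*\,(\ov{\ }\text{ of that})$. The crucial observation is Lemma~\ref{lem:prefixes}(3) applied to $\Pa(u\alp,\ell)$: since $p_{a+1}\alp$ is a prefix of a word in $\Pa(u\alp,\ell)^*\alp$, and by Lemma~\ref{lem:prefixes}(1),(2) the $p$'s prefix-order is total, I can always arrange to prepend them in order of increasing length, so that $p_{a+1}\alp$ is a prefix of $p_a\cdots p_1\, u\alp \leq w_{a,b}$ — which is precisely the condition that prepending $p_{a+1}$ is a legal left hairpin completion of $w_{a,b}$ of length $\abs{p_{a+1}}\leq\ell$. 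Symmetrically for the $q$'s on the right with bound $r$. Thus by induction on $a+b$ every $w_{a,b}\in\Has(L_\alp,\ell,r)$, and $z=w_{m,m'}$ in particular.

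\textbf{Main obstacle.} The delicate point is the \emph{interleaving/ordering} of the left and right steps: the $p_i$ must be prepended in increasing-length order to guarantee each is a prefix of the current word (this is where Lemma~\ref{lem:prefixes}(2),(3) is essential), and likewise the $q_i$, but the two streams are independent, so one must check that a left step does not spoil the suffix structure needed for a subsequent right step, and vice versa. This reduces to verifying that prepending $p_{a+1}$ (which only changes the prefix) leaves the suffix $\ov\alp\,\ov{q_1}\cdots\ov{q_b}$ untouched and still of the form required for the next right hairpin completion — routine once one notes left and right completions act on disjoint ends, provided the word is long enough that the new prefix and the existing suffix do not overlap. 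The length bound $\abs w-k\geq\ell,r$ (available because $w\in\alp\Sig^*\ov\alp\ov u$ with $\abs u=\ell\geq r$) should be invoked to rule out overlap; I would make this quantitative at the start and then the interleaving argument goes through cleanly. Once Lemma~\ref{lem:final} is established, combining it with the displayed inclusion gives
\begin{equation*}
	\Has(L_\alp,\ell,r) = \Has(L_\alp,\ell-1,r) \cup \bigcup_{u\in\Sig^\ell}\aL(u,\ell,r),
\end{equation*}
and since each $\aL(u,\ell,r)$ is built from $\Has(L_\alp,\ell-1,r)$ by the three allowed operations, the induction on $\ell$ (with base $\Has(L_\alp,0,0)=L_\alp$) completes the proof of Theorem~\ref{main}.
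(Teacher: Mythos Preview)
Your setup is fine --- writing $z=p_1\cdots p_m\,u\,w\,\ov{q_1}\cdots\ov{q_{m'}}$ and observing that $uw$ is a single left hairpin completion of $w$ (so $uw\in\Has(L_\alp,\ell,r)$) is exactly how the paper begins. The gap is in what you call the ``routine'' part.

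Your key claim is that ``left and right completions act on disjoint ends'', so that prepending a $p$ cannot spoil a later right step. This is false: a \emph{right} hairpin completion $W\mapsto W\ov q$ is legal precisely when $q\alp$ is a \emph{prefix} of $W$, so it depends on the \emph{left} end of $W$ --- exactly the end that a preceding left completion has just modified. (Symmetrically, a left completion depends on the suffix.) Concretely, take $k=1$, $\alp=a$, $u=abac$ (so $\ell=4$), and let $p=ab$, $q=abac$, both in $\Pa(u\alp,\ell)=\{\varepsilon,ab,abac\}$. The word $uw$ has prefix $u\alp=abaca$, but after prepending $p$ the word $p\,uw$ has prefix $ababa$; now $q\alp=abaca$ is \emph{not} a prefix of $p\,uw$, so the right step with $q$ is illegal. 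Doing the right step first and the left step afterwards \emph{does} work here --- and a symmetric example (swap the roles of $p$ and $q$) shows that sometimes the left step must come first. Hence no fixed global order (``all lefts then all rights'', or ``increasing length'') succeeds, and reordering the $p_i$ among themselves is not available anyway since their concatenation is a fixed factor of $z$.

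What the paper does instead is choose the interleaving adaptively. Starting from $w$ (which already has prefix $u\alp$ and suffix $\ov\alp\ov u$), it picks the $y_n$ of maximal length among the remaining right factors, then the largest index $m$ with $\abs{y_n}\le\abs{x_m}$; it performs the left completions $x_1,\ldots,x_m$ (these are always legal because the suffix $\ov\alp\ov u$ persists), and \emph{then} the right completions $y_1,\ldots,y_n$, which are now legal because $\abs{y_j}\le\abs{x_m}$ forces $y_j\alp\le x_m\alp$ via Lemma~\ref{lem:prefixes}(2), and $x_m\alp$ is the current prefix. One then recurses with $u'=y_n$ in place of $u$. This length-comparison between the outermost left block and the pending right blocks is the missing idea in your sketch; once you insert it, your induction on the number of remaining factors goes through.
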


\begin{proof}
	We start by proving a special case of the claim that is successfully  used 
	later to derive the result.
	Consider a word $w'$ together with the factorization
	\begin{equation*}
		w' = x_m \cdots x_1 w \ov{y_1} \cdots \ov{y_n}
	\end{equation*}
	with $m\geq 0$, $n\geq 1$ and where for some word $u\in\Sig^*$
	\begin{enumerate}
		\item $w \in \Has(L_\alp,\ell,r) \cap u\alp\Sig^*\ov \alp\ov u$,
		\item $x_1,\ldots,x_m\in\Pa(u\alp,\ell)$,
		\item $y_1,\ldots, y_n\in\Pa(u\alp,r)$, and
		\item $m = 0$ or $\abs{y_j}\leq\abs{x_m}$ for all $j$ such that $1\leq j\leq n$.
	\end{enumerate}
	We claim $w'\in\Has(L_\alp,\ell,r)$, too.
	Indeed, if $m=0$, it is plain that $w'$ is an $n$-iterated right hairpin completion of $w$.
	Otherwise $x_m\cdots x_1 w$ is an $m$-iterated left hairpin completion of $w$.
	By the fourth property and Lemma~\ref{lem:prefixes}, we have $y_1,\ldots,y_n\in\Pa(x_m\alp,r)$.
	Hence, $w'$ is an $n$-iterated right hairpin completion of $x_m\cdots x_1 w$
	and we conclude $w'\in\Has(L_\alp,\ell,r)$.
	
	\bigskip
	
	Now, let $u\in\Sig^\ell$ and $z\in \aL(u,\ell,r)$. There is a factorization
	\begin{equation*}
		z = x_s \cdots x_1 w \ov{y_1} \cdots \ov{y_t}
	\end{equation*}
	where
	\begin{enumerate}
		\item $w\in u \left( \Has(L,\ell-1,r) \cap \alp\Sig^*\ov\alp\ov u\right) \sse
			\Has(L_\alp,\ell,r) \cap u\alp\Sig^*\ov \alp\ov u$,
		\item $x_1,\ldots,x_s\in\Pa(u\alp,\ell)$, and
		\item $y_1,\ldots,y_t\in\Pa(u\alp,r)$.
	\end{enumerate}
	If $t = 0$, the word $z$ is an $s$-iterated left hairpin completion of $w$.
	Otherwise, let $n\geq 1$ be the maximal index such that $\abs{y_n}\geq\abs{y_j}$ for all $1\leq j\leq t$,
	and let $m$ be the maximal index such that $\abs{y_n}\leq \abs{x_m}$ or $0$ if no such index exists.
	Let $w' = x_m \cdots x_1 w \ov{y_1} \cdots \ov{y_n}$.
	Note that $w'$ satisfies the conditions of the special case
	we discussed above and hence $w'\in\Has(L_\alp,\ell,r)$.
	
	With $u' = y_n$ we obtain
	\begin{equation*}
		z = x_s \cdots x_{m+1} w' \ov{y_{n+1}} \cdots \ov{y_t}
	\end{equation*}
	where, by the choice of $n$, $m$ and by Lemma~\ref{lem:prefixes},
	\begin{enumerate}
		\item $w'\in\Has(L_\alp,\ell,r) \cap u'\alp\Sig^*\ov \alp\ov{u'}$,
		\item $x_{m+1},\ldots,x_s\in\Pa(u'\alp,\ell)$, and
		\item $y_{n+1},\ldots,y_t\in\Pa(u'\alp,r)$.
	\end{enumerate}
	At this point we may continue inductively and deduce $z\in\Has(L_\alp,\ell,r)$.
\end{proof}

The previous lemma tells us, if $\ell \geq r$, the \iphpc of $L_\alp$ can be represented by
\begin{equation*}
	\Has(L_\alp,\ell,r) = \Has(L_\alp,\ell-1,r) \cup \bigcup_{u\in \Sig^\ell} \aL(u,\ell,r).
\end{equation*}
Symmetrically, if $r > \ell$, let us define
\begin{equation*}
	\aR(u,\ell,r) = \Pa(u\alp,\ell)^* \left( \Has(L_\alp,\ell,r-1) \cap u\alp\Sig^*\ov\alp\right)\ov u
		\ov{\Pa(u\alp,r)}^*.
\end{equation*}
The \iphpc of $L_\alp$ can be represented by
\begin{equation*}
	\Has(L_\alp,\ell,r) = \Has(L_\alp,\ell,r-1) \cup \bigcup_{u\in \Sig^r} \aR(u,\ell,r).
\end{equation*}

We conclude, the \iphpc of a language $L$ can be represented by an expression
using $L$ and the operations
union, intersection with regular sets, and concatenation with regular sets.

\section{The size of NFAs accepting iterated parameterized hairpin completions}
\label{sec:size}

Let $L$ be a regular language and $\ell,r\in\N$ be finite bounds.
In this section we analyze the size of NFAs accepting the
iterated parameterized hairpin completion $\Hks(L,\ell,r)$
with respect to
the size of an NFA accepting $L$ and the bounds $\ell$ and $r$.
By the size of an NFA we mean its number of states.
Recall that $k$ is treated as a constant.
(Assuming $k \leq \ell$ or $k\leq r$ would induce the same complexity, but this is not shown here.)
Our results are the following.

\begin{theorem}\label{thm:size}\label{thm:nfa} \ \\ \vspace{-\baselineskip}
	\begin{enumerate}
		\item Let $m\geq 1$.
			There is a regular language $L$ such that 
			neither the language $\Hk(L,m,m)$ nor the language $\Hks(L,m,m)$
			can be detected by an NFA with less than $2^m$ states.
			\label{lower_bound}
		\item
			Let $L$ be a regular language which is accepted by an NFA of size $n$.
			Let $\ell,r\in N$ and let $m = \max\oneset{\ell,r}$.
			There is an NFA accepting the iterated parameterized hairpin completion
			$\Hks(L,\ell,r)$ whose size is in $2^{\Oh(m^2)}n$.
			\label{upper_bound}
	\end{enumerate}
\end{theorem}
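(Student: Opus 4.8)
\medskip

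\emph{Part~\ref{lower_bound} (lower bound).}
The plan is to build a regular language $L$ that encodes an $m$-bit ``memory'' in the $\gamma$-part of a single hairpin completion, so that any NFA for $\Hk(L,m,m)$ must distinguish $2^m$ different suffixes.
Take $\Sigma=\{a,b,\dots\}$ with a fixed involution and fix $\alp\in\Sig^k$.
Choose $L$ of the form $\alp\,\Sig^m\,\beta_0\,\ov\alp$ for a suitable fixed block $\beta_0$, so that every word $w\in L$ has the factorization $w=\gaba$ with $\gam=\eps$ but also the factorization needed for a \emph{right} hairpin completion with the $\gamma$-part being the leading $\Sig^m$ block: after one right hairpin completion we append $\ov\gam$, i.e.\ the reversed complement of an arbitrary word in $\Sig^m$.
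Thus $\Hk(L,m,m)\supseteq \set{\alp u'\beta_0\ov\alp\,\ov u'}{u'\in\Sig^m}$ where $u'$ ranges over a set of $2^m$ pairwise non-equivalent words (choosing $\Sigma$ with two self-involuting letters makes $u'\mapsto\ov{u'}$ injective on $\Sig^m$).
Any NFA $A$ accepting $\Hk(L,m,m)$ must, after reading the common prefix $\alp u'\beta_0\ov\alp$, be in a state set that determines exactly which $\ov{u'}$ completes the word; a counting/fooling-set argument (the $2^m$ prefixes $\alp u'\beta_0\ov\alp$ are pairwise distinguishable, with distinguishing suffix $\ov{u'}$) forces $\abs{Q}\ge 2^m$.
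Finally one checks that no \emph{shorter} words sneak into $\Hk(L,m,m)$ to collapse the argument, and that passing to the iterated version $\Hks(L,m,m)$ only adds words and hence cannot decrease the NFA size; the same fooling set works verbatim.

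\medskip

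\emph{Part~\ref{upper_bound} (upper bound).}
Here I would simply \emph{follow the recursive construction in the proof of Theorem~\ref{main}} and track state counts.
By the reduction there, $\Hks(L,\ell,r)=L\cup\bigcup_{\alp\in\Sig^k}\Has(L_\alp,\ell,r)$, where each $L_\alp=\Hk(L,\ell,r)\cap\alp\Sig^*\ov\alp$ is obtained from $L$ by one parameterized (non-iterated) hairpin completion with finite bounds: $\Hk(L,\ell,r)$ is a finite union (over $\alp\in\Sig^k$ and over $\gam\in\Sig^{\le m}$, of which there are $2^{\Oh(m)}$ many) of languages of the form $\gam(\alp\Sig^*\ov\alp\ov\gam\cap L)$ and $(\gam\alp\Sig^*\ov\alp\cap L)\ov\gam$; each such language has an NFA of size $\Oh(m)\cdot n$, so $L_\alp$ — and hence the base case of the recursion — is accepted by an NFA of size $2^{\Oh(m)}n$.
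Then $\Has(L_\alp,\ell,r)$ is built by the recurrence $\Has(L_\alp,\ell,r)=\Has(L_\alp,\ell-1,r)\cup\bigcup_{u\in\Sig^\ell}\aL(u,\ell,r)$ (and the symmetric one in $r$), where
$\aL(u,\ell,r)=\Pa(u\alp,\ell)^*u\bigl(\Has(L_\alp,\ell-1,r)\cap\alp\Sig^*\ov\alp\ov u\bigr)\ov{\Pa(u\alp,r)}^*$.
Crucially, $\Pa(u\alp,\ell)$ and $\Pa(u\alp,r)$ are \emph{finite} sets of words each of length $\le m$, so $\Pa(u\alp,\ell)^*$ and $\ov{\Pa(u\alp,r)}^*$ are accepted by NFAs of size $\Oh(m)$; concatenation and intersection-with-the-regular-set $\alp\Sig^*\ov\alp\ov u$ multiply the state count by $\Oh(m)$-many factors.

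\medskip

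To get the claimed bound I would set up the right potential.
Write $N(\ell,r)$ for the maximum NFA size produced by the construction for $\Has(L_\alp,\ell,r)$ with $m=\max\{\ell,r\}$ fixed, normalised by $n$.
The union over $u\in\Sig^\ell$ ranges over $\abs\Sig^{\ell}=2^{\Oh(m)}$ words, each contributing an NFA of size $\Oh(m^2)\cdot N(\ell-1,r)$ (the $\Oh(m)$ factor from the prefix/suffix $\Pa^*$ automata, another $\Oh(m)$ from the $\alp\Sig^*\ov\alp\ov u$ intersection and the concatenations with $u$), plus the carried-over term $N(\ell-1,r)$.
This gives $N(\ell,r)\le 2^{\Oh(m)}N(\ell-1,r)$ (and symmetrically in $r$), and since the recursion has depth $\ell+r\le 2m$, unrolling yields $N(\ell,r)\le \bigl(2^{\Oh(m)}\bigr)^{2m}=2^{\Oh(m^2)}$.
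Multiplying back by the base-case factor $n$ and by the constant $\abs\Sig^k=\Oh(1)$ for the outer union over $\alp$, we obtain an NFA for $\Hks(L,\ell,r)$ of size $2^{\Oh(m^2)}n$, as required.

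\medskip

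The main obstacle is Part~\ref{lower_bound}: one must choose $L$ (the fixed blocks $\beta_0$, the length of the ``memory'' block, and the alphabet/involution) carefully enough that (i) the $2^m$ relevant words genuinely appear in $\Hk(L,m,m)$, (ii) no additional short or ``wraparound'' hairpin completions identify any of the $2^m$ prefixes (so the fooling set survives), and (iii) the same $L$ works simultaneously for the non-iterated and the iterated operation. For Part~\ref{upper_bound} the only care needed is bookkeeping: confirming that every regular gadget used in the representation of Theorem~\ref{main} ($\Pa(u\alp,\cdot)^*$, the sets $\alp\Sig^*\ov\alp\ov u$, the concatenated words $u$ and $\ov u$) has an NFA of size $\Oh(m)$, and that the recursion depth is linear in $m$ so that the per-level blow-up of $2^{\Oh(m)}$ compounds to exactly $2^{\Oh(m^2)}$ rather than something worse.
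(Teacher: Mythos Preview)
Your Part~\ref{upper_bound} is essentially the paper's own argument: follow the recursion from Theorem~\ref{main}, bound each level by $2^{\Oh(m)}$ times the previous, and unfold over depth $\ell+r\le 2m$. The paper tracks the per-level exponent a touch more finely (writing $N_{i,j}\in 2^{\Oh(i)}N_{i-1,j}$ rather than $2^{\Oh(m)}$), but both routes land on $2^{\Oh(m^2)}n$, so there is no material difference.

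Part~\ref{lower_bound} has a genuine gap. The sentence ``passing to the iterated version $\Hks(L,m,m)$ only adds words and hence cannot decrease the NFA size'' is a false principle: enlarging a language can certainly shrink its minimal NFA (take any non-trivial $L\subsetneq\Sig^*$ and enlarge it to $\Sig^*$). A fooling-set lower bound for $\Hk(L,m,m)$ does \emph{not} transfer automatically to a superset; you must verify that for distinct $u',v'$ the ``cross'' word $\alp u'\beta_0\ov\alp\,\ov{v'}$ stays \emph{outside} $\Hks(L,m,m)$, and that requires controlling \emph{all} iterated completions of $L$, not just the one-step ones. The paper does precisely this extra work: it chooses $L=c\{\ov a,\ov b\}^*a^k\ov a^k$ over a six-letter alphabet so that (i) the single $c$ forbids left hairpins and the only right hairpin binds $\ov a^k$ to $a^k$, giving $\Hk(L,m,m)=\bigcup_{v\in\{\ov a,\ov b\}^{\le m-1}}cva^k\ov a^k\ov v\ov c$, and (ii) these words admit no non-trivial further hairpins, whence $\Hks(L,m,m)=L\cup\Hk(L,m,m)$ and the fooling set provably survives. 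Your own $L$ is left schematic (``for a suitable fixed block $\beta_0$''); as your closing paragraph already concedes, the entire argument stands or falls on a concrete choice that rules out unwanted factorizations and wraparound hairpins, and you have not supplied one.
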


\begin{proof}[Proof of \ref{lower_bound}]
	Let $\Sig=\{a,\ov a, b, \ov b, c,\ov c\}$ and $L= c\{\ov a,\ov b\}^* a^k \ov a^k$.
	For any word $w\in L$ there is no possibility of building a left hairpin
	and the only possible right hairpin is to bind the suffix $\ov a^k$ to $a^k$
	if $\abs w \leq m+2k$.
	Therefore, we have
	\begin{equation*}
		\Hk(L,m,m) = \bigcup_{v\in\{\ov a,\ov b\}^{\leq m-1}} c v a^k \ov a^k \ov v \ov c.
	\end{equation*}
	
	Now let $w = c v a^k \ov a^k \ov v \ov c$ with $v\in\{\ov a,\ov b\}^{\leq m-1}$.
	The only way to build a hairpin is to bind its prefix to its suffix, hence
	\begin{equation*}
		\Hks(L,m,m) = L \cup \Hk(L,m,m).
	\end{equation*}
	
	We claim that an NFA accepting $\Hk(L,m,m)$ or $\Hks(L,m,m)$ has a size of at least
	$2^m$.
	We prove the claim for the language $\Hk(L,m,m)$;
	the argumentation for $\Hks(L,m,m)$ is exactly the same.
	
	Consider an NFA accepting $\Hk(L,m,m)$ and let $Q$ denote its set of states.
	For a word $u\in\Sig^*$ we denote by $P(u)\sse Q$ the set of states which
	are reachable from an initial state with a path labelled by $u$.
	Now let $v\in \{\ov a,\ov b\}^{\leq m-1}$.
	Since $cva^k\ov a^k\ov v \ov c\in\Hk(L,m,m)$,
	there is a state $q\in P(cva^k\ov a^k)$ such that a path
	from $q$ to a final state exists which is labelled by $\ov v\ov c$.
	For all words $u\in \{\ov a,\ov b\}^{\leq m-1}$ with $u\neq v$
	the state $q$ does not belong to $P(cua^k\ov a^k)$ because
	$cua^k\ov a^k\ov v\ov c\notin \Hk(L,m,m)$.
	Each word $v\in\{\ov a,\ov b\}^{\leq m-1}$ yields such a state $q$,
	they are mutually different, and none of them is an initial state (as
	$\ov v\ov c\notin \Hk(L,m,m)$).
	Therefore, the number of states $\abs Q$ has to be greater than
	$\bigl| \{\ov a,\ov b\}^{\leq m-1}\bigr| = 2^m-1$.
\end{proof}

In order to prove the second claim of Theorem~\ref{thm:size}
we implicitly use some well-known constructions of NFAs
which accept concatenation, union, or intersection of regular languages.
Consider two NFAs which accept the languages
$L_1$, $L_2$ and which are of size $n_1$, $n_2$, respectively.
There is an NFA accepting the concatenation $L_1L_2$ which is of size $n_1+n_2$,
an NFA accepting the union $L_1\cup L_2$ which is of size $n_1+n_2$,
and an NFA accepting the intersection $L_1\cap L_2$ which is of size $n_1\cdot n_2$.
For details on how these NFAs are constructed see, e.g., \cite{HU}.

\begin{proof}[Proof of \ref{upper_bound}]
Let $L$ be a regular language which is accepted by an automaton of size $n$
and let $\ell,r\in \N$.
The parameterized hairpin completion of $L$ is given by
\begin{align*}
	\Hk(L,\ell,r) = &\bigcup_{\alp\in\Sig^k}\bigcup_{\gamma\in\Sig^{\leq\ell}}
			\gamma(\alp\Sig^*\ov\alp \ov\gamma \cap L) \cup
		\bigcup_{\alp\in\Sig^k}\bigcup_{\gamma\in\Sig^{\leq r}}
			(\gamma\alp\Sig^*\ov\alp\cap L)\ov\gamma.
\end{align*}
For $\gamma,\alp\in\Sig^*$ there is an NFA accepting $\gamma(\alp\Sig^*\ov\alp \ov\gamma \cap L)$
which has a size in $\Oh(\abs{\gamma\alp}\cdot n)$.
Hence, the \phpc of $L$ can be accepted
by an NFA which has a size in $\Oh(\abs{\Sig}^m m \cdot n)\sse 2^{\Oh(m)}n$
where $m=\max\oneset{\ell,r}$.

For $\alp\in\Sig^k$ the language $L_\alp = \Hk(L,\ell,r) \cap \alp\Sig^*\ov\alp$
can also be accepted by an NFA which has a size in $2^{\Oh(m)}n$.
Let $N_{i,j}$ denote the minimal size of an NFA accepting $\Has(L_\alp,i,j)$ for $i,j\in \N$.
Since $\Hk(L_\alp,0,0) = L_\alp$, we have $N_{0,0}\in 2^{\Oh(m)}n$.
For $i\geq j$ let us recall that 
\begin{gather*}
	\Has(L_\alp,i,j) = \Has(L_\alp,i-1,j) \cup \bigcup_{u\in \Sig^i} \aL(u,i,j), \\
	\aL(u,i,j) = \Pa(u\alp,\ell)^* u \left( \Has(L_\alp,i-1,j) \cap \alp\Sig^*\ov\alp\ov u\right)
		\ov{\Pa(u\alp,r)}^*.
\end{gather*}
The size of a minimal NFA accepting $\cL_\alp(u,i,j)$ is in $\Oh(i\cdot N_{i-1,j})$ whence
\begin{equation*}
	N_{i,j} \in \Oh( \abs\Sig^i i\cdot N_{i-1,j}) \sse 2^{\Oh(i)} N_{i-1,j}.
\end{equation*}
Symmetrically, for $j > i$ we have $N_{i,j} \in 2^{\Oh(i)} N_{i,j-1}$. By unfolding
the recursion we obtain
\begin{align*}
	N_{\ell,r} &\in \prod_{i=1}^\ell 2^{\Oh(i)}\cdot\prod_{j=1}^r 2^{\Oh(j)} \cdot 2^{\Oh(m)}n
		= \prod_{i=1}^m 2^{\Oh(i)} \cdot n 
		= 2^{\Oh(\sum_{i=1}^m i)}n
		= 2^{\Oh(m^2)}n.
\end{align*}
Now, the iterated parameterized hairpin completion is given by
\begin{equation*}
	\Hks(L,\ell,r) = L \cup \bigcup_{\alp\in\Sig^k} \Has(L_\alp,\ell,r).
\end{equation*}
and there is an NFA accepting $\Hks(L,\ell,r)$
which has a size in $\Oh(N_{\ell,r}+n) \sse 2^{\Oh(m^2)}n$.
\end{proof}

Statement~2 of Theorem~\ref{thm:nfa} also yields an algorithm to solve the membership
problem for the iterated bounded hairpin completion of a regular language.

\begin{corollary}
	Let $L$ be a regular language, given by an NFA of size $n$, and let $\ell,r\in\N$.
	The problem whether an input word $w$ belongs to $\Hks(L,\ell,r)$ can be
	decided in linear time $c\cdot \abs w$,
	where the constant $c$ depends on the size $n$ and the bounds $\ell$, $r$.
	More precisely, for $m = \max\{\ell,r\}$ we have $c\in 2^{\Oh(m^2)}n^2$.
\end{corollary}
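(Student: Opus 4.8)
The plan is to build, from the size-$2^{\Oh(m^2)}n$ NFA for $\Hks(L,\ell,r)$ furnished by Theorem~\ref{thm:nfa}.\ref{upper_bound}, a fixed deterministic device that can be run against an input word $w$ in time linear in $\abs w$. First I would invoke Theorem~\ref{thm:nfa} to obtain an NFA $A$ with $\abs Q \in 2^{\Oh(m^2)}n$ states accepting $\Hks(L,\ell,r)$. Since $L$, $\ell$, and $r$ are fixed (part of the problem instance, not the input), this NFA is a fixed object whose size depends only on $n$, $\ell$, $r$. I would then determinize $A$ by the subset construction, yielding a DFA $A'$ with at most $2^{\abs Q} \in 2^{2^{\Oh(m^2)}n}$ states; this DFA is again a fixed object. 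Running a DFA on a word $w$ takes exactly $\abs w$ transition steps, each costing $\Oh(1)$ once the transition table is precomputed, so membership of $w$ in $\Hks(L,\ell,r)$ is decided in time $c\cdot\abs w$ where $c$ absorbs the per-step cost and depends only on $n$, $\ell$, $r$.

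For the sharper bound $c \in 2^{\Oh(m^2)}n^2$ I would avoid full determinization and instead simulate the NFA $A$ directly: maintain the current reachable set $P \sse Q$ of states, and for each input letter update $P$ by following all outgoing transitions. A naive update costs $\Oh(\abs Q^2)$ per letter, but using the fact that each state has $\abs\Sig = \Oh(1)$ outgoing transitions, the update over all of $P$ costs $\Oh(\abs Q)$ per letter if we only scan the at most $\abs\Sig\cdot\abs Q$ edges — wait, more carefully: the standard on-the-fly NFA simulation processes each input symbol in time $\Oh(\abs E)$ where $\abs E$ is the number of transitions, and here $\abs E \in \Oh(\abs\Sig\cdot\abs Q^2)$ in the worst case but for our constructed automata $\abs E \in \Oh(\abs Q)$ since letters are drawn from the fixed alphabet and each state emits $\Oh(1)$ transitions. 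The honest statement is $c \in \Oh(\abs Q) \sse 2^{\Oh(m^2)}n$ for a well-structured NFA, or at worst $c \in \Oh(\abs Q^2) \sse 2^{\Oh(m^2)}n^2$ if one bounds the transition count crudely; either way $c \in 2^{\Oh(m^2)}n^2$ holds, which is what the corollary claims.

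The only genuinely delicate point is the clean separation between ``input'' and ``parameters'': the linear-time claim is linear in $\abs w$ with the constant $c$ swallowing everything that depends on $n$, $\ell$, $r$, and $k$. I would state explicitly that the automaton $A$ (and, if one determinizes, $A'$) is constructed once, in a preprocessing phase whose cost is not counted against the linear-time bound, and that thereafter each input $w$ is processed by a single left-to-right pass. The bound on $c$ then follows by reading off $\abs Q \in 2^{\Oh(m^2)}n$ from Theorem~\ref{thm:nfa} and noting that the per-symbol simulation cost of an NFA is at most quadratic in its number of states, giving $c \in 2^{\Oh(m^2)}n^2$; since the iterated bounded hairpin completion $\cH(L,m,m)$ is the special case $\Hks(L,m,m)$, the corollary applies verbatim to it. I expect no real obstacle here — the result is an immediate consequence of Theorem~\ref{thm:nfa} together with standard automaton simulation — so the main care is simply bookkeeping the dependence of $c$ on the fixed data.
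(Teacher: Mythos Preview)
Your proposal is correct and matches the paper's argument: build the NFA of size $N\in 2^{\Oh(m^2)}n$ from Theorem~\ref{thm:nfa}, then decide membership by an on-the-fly subset simulation costing $\Oh(N^2)\sse 2^{\Oh(m^2)}n^2$ per input letter. The detour through full determinization is unnecessary (and would blow up the constant), but you correctly discard it in favour of the online power-set simulation, which is exactly what the paper does; the paper additionally remarks that the NFA itself can be constructed within the same $2^{\Oh(m^2)}n^2$ budget, so the preprocessing cost is absorbed into $c$ rather than left uncounted.
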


\begin{proof}
	Following the proof of Statement~2 of Theorem~\ref{thm:nfa},
	we can construct an NFA $A=(Q,\Sig,E,I,F)$
	accepting the \ihpc $\Hks(L,\ell,r)$ which is of a size in $2^{\Oh(m^2)}n$.
	Let us denote the size of this NFA by $N$.
	Note that the construction can be preformed in time
	$\Oh(\abs E) \sse \Oh(N^2) \sse 2^{\Oh(m^2)}n^2$.
	
	The input $w$ can be accepted by an online power-set construction of the NFA $A$:
	We start with the set of states $P_0 = I$.
	When we read the $i$-th letter $a$ of the input $w$ we
	construct the set $P_{i}$ by following all outgoing edges of states in $P_{i-1}$
	which are labelled by $a$.
	As every state in $P_{i-1}$ has at most $N$ outgoing edges labelled by $a$,
	one step can be performed in $\Oh(N^2) \sse 2^{\Oh(m^2)}n^2$ time.
	The algorithm stops after $w$ is read and $P_{\abs w}$ is computed.
	The input $w$ belongs to $\Hks(L,\ell,r)$ if and only if
	$P_{\abs w}$ contains a final state from $F$.
\end{proof}

So far, the best known time complexity of
the membership problem for the iterated (unbounded) hairpin completion of a
regular language $L$ is quadratic with respect to the length of the input word,
by an algorithm from \cite{ManeaMM09}.
This algorithm can easily be adapted to solve the membership problem for the iterated
bounded hairpin completion in quadratic time.
Hence, if we measure the time complexity
with respect to the length of the input word only,
we have an improvement from quadratic to linear time (in the bounded case).

\section{The Iterated Hairpin Completion of Singletons}
\label{sec:ihpc}

The class of iterated hairpin completions of singletons is defined as 
\begin{equation*}
	\HCS_k = \set{\tHw}{w\in\Sig^*}.
\end{equation*}
We solve the problem whether \HCSk includes non-regular or non-context-free languages
which was asked in \cite{ManeaMY10}.
Furthermore, we will show that the result also holds if we consider the iterated one-sided \hpc.

Let us recall that, as we are treating the unbounded hairpin completion now, 
for the usual factorization $\gabag$ of a hairpin completion,
the length of the factor $\gamma$ is not bounded by a constant anymore.
By the results of the previous section it is obvious,
that the possibility of creating arbitrary long prefixes and suffixes
plays an essential role in following proof.

\begin{theorem}
	The iterated one- and two-sided hairpin completions of a singleton are 
	in \NL but not context-free, in general.
\end{theorem}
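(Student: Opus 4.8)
The plan is to exhibit a single explicit word $w$ and show that $\twoH^*(\{w\})$ (and, with a small modification, $\oneH^*(\{w\})$) is not context-free, while membership in \NL{} follows immediately from the discussion in the introduction. The \NL{} claim is not the issue: for a singleton $L=\{w\}$, an NFA of size $\abs w$ accepts $L$, and the iterated hairpin completion of a regular language lies in $\mathrm{NSPACE}(\log)=\NL$ by the remarks attributed to \cite{ChepteaMM06}. So the whole burden is the non-context-freeness. I would design $w$ so that after the first hairpin completion the word sits in $\alp\Sig^*\ov\alp$ and from then on every further completion can only append a prefix that is a power/concatenation of fixed $\alp$-prefixes — exactly the rigid combinatorial behaviour isolated by Lemma~\ref{lem:prefixes} and Lemma~\ref{lemon} in the bounded case, except that here the $\gamma$-parts are unbounded, so the ``exponents'' can grow without bound. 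The idea is that this forces the iterated hairpin completions of $w$ to contain words whose shape encodes something like $\{\alp^i\beta\ov\alp^i : i\ge 0\}$ together with interacting left/right growth, and then a pumping-lemma (Ogden/Bar-Hillel) argument on a suitably chosen intersection with a regular language rules out context-freeness.

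Concretely, the steps I would carry out are: (1) Fix an alphabet with an involution and pick $w$ carefully — something of the form $w = \alp\, u\, \ov\alp\, v\, \alp$ as in the running example (Figure~\ref{fig:ihpc}), with $u,v$ chosen so that the only hairpins ever available are the ``obvious'' ones binding a prefix $\alp$ to a suffix $\ov\alp$, and so that left and right completions genuinely interact. (2) Prove a structural lemma describing $\twoH^*(\{w\})$ exactly: every element has the form $p\,(\text{core})\,\ov q$ where $p$ is a product of the finitely many available $\alp$-prefixes and $\ov q$ a product of the available $\ov\alp$-suffixes, with an arithmetic constraint linking the two sides (this is the unbounded analogue of the $\aL(u,\ell,r)$ representation, and Lemma~\ref{lem:prefixes}(3) is what keeps the set of prefixes closed under the relevant operations). (3) Intersect $\twoH^*(\{w\})$ with a cleverly chosen regular language $R$ so that $\twoH^*(\{w\})\cap R$ becomes a language of the type $\{\,x^i y^j z : \text{some relation between } i,j\,\}$ that is manifestly not context-free (e.g. forcing a non-semilinear or a three-way-matching condition such as $i=j$ with a third growing block, or $\{a^i b^j c^i d^j\}$-style cross-serial matching coming from alternating left and right completions). (4) Conclude by the fact that context-free languages are closed under intersection with regular sets, so non-context-freeness of the intersection gives non-context-freeness of $\twoH^*(\{w\})$; remark that $w$ can be taken in $\alp\Sig^*\ov\alp$ already so that one-sided completions suffice, giving the same conclusion for $\oneH^*(\{w\})$.

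The main obstacle is step~(2): pinning down \emph{exactly} which words are iterated hairpin completions of the chosen $w$. In the bounded case the authors get away with bounded $\gamma$, so only finitely many prefixes ever appear and the representation collapses into regular-operation expressions; in the unbounded case a newly created prefix can itself serve as (part of) a new $\alp$-prefix for later steps, so one must argue that no ``unexpected'' factorization $\gaba$ ever arises — i.e., that the word never accidentally contains a long internal palindromic-type block other than the designed one. Getting $u$ and $v$ right so that this holds for \emph{all} iterates, and then extracting the precise arithmetic relation between the left exponent and the right exponent, is the delicate part; once that relation is in hand, choosing $R$ and invoking the pumping lemma is routine. I would also double-check the edge cases $\abs\alp=k$ versus $\abs\alp\ge k$ and make sure the constant $k$ is absorbed harmlessly into the construction.
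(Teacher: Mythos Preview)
Your outline is essentially the paper's strategy: pick an explicit $w$, intersect $\twoH^*(\{w\})$ with a regular language $R$, and obtain a non-context-free intersection (indeed a three-way matching, exactly as you anticipate). The \NL\ part is handled the same way.

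The one real divergence is your step~(2). You plan to first give a \emph{complete} structural description of $\twoH^*(\{w\})$ and only then intersect with $R$; you correctly flag this as the hard part. The paper avoids this entirely. Its word is
\[
  w = \alp\, b\, \alp\, \ov\alp\, \alp\, c\, \ov\alp
\]
over $\{a,\ov a,b,\ov b,c,\ov c\}$ with $\alp=a^k$, and $R = w\,u^+\,v\,\ov u^+\,\ov w\,\ov u^+\,\ov w$ for $u=\ov b\,\ov\alp$, $v=\alp\,\ov\alp\,\ov b\,\ov\alp$. The marker letter $c$ occurs exactly once in every word of $R$, and it sits inside the prefix $w$; hence $w$ occurs in any $z\in R$ only as its prefix. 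This immediately forces every $z\in R\cap\twoH^*(\{w\})$ to be an iterated \emph{right} hairpin completion of $w$ (so $R\cap\twoH^*(\{w\})=R\cap\oneH^*(\{w\})$ comes for free, rather than needing a separate argument or a modified $w$). Then one simply walks through the unique possible sequence $w_0=w,\,w_1,\ldots$ of right completions that stay prefixes of $z$; the design of $w$ and $R$ makes this sequence deterministic and yields $R\cap\twoH^*(\{w\}) = \{\,w\,u^r\,v\,\ov u^{\,r}\,\ov w\,\ov u^{\,r}\,\ov w : r\ge 1\,\}$.

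So you never need a global description of all iterated completions of $w$; you only need to analyze those whose every intermediate stage is a prefix of a word in $R$, and the marker trick makes that analysis a finite case check. Your plan would work if you could carry out step~(2), but the paper's route is substantially easier and is what you should aim for.
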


\begin{proof}
	The membership to \NL follows by the fact that \NL is closed under
	iterated bounded hairpin completion, which has been proved in \cite{ChepteaMM06}.
	For convenience, we give a sketch of the proof, here.
	
	Consider a language $L\in \NL$.
	The iterated hairpin completion $\Hks(L)$ can be accepted by a 
	non-deterministic Turing machine that works as follows.
	We use two pointers $i$ and $j$ which mark the beginning and the end
	of a factor of the input $w$, respectively.
	By $w(i,j)$ we denote the factor beginning at position $i$ and ending at
	position $j$.
	\begin{enumerate}
		\item We start with $i = 1$ and $j = \abs w$.
		\item Non-deterministically either continue with step 3 or skip to step 5.
		\item Either guess $i'$ such that $i < i' < j$ and verify that
			$w(i,j)$ is a left hairpin completion of $w(i',j)$
			or guess $j'$ such that $i < j' < j$ and verify that $w(i,j)$
			is a right hairpin completion of $w(i,j')$.
			If the verification is successful, continue with $i = i'$ (resp.\ $j = j'$).
		\item Repeat step 2.
		\item Accept if and only if $w(i,j) \in L$.
	\end{enumerate}
	
	Obviously, this Turing machine accepts $\Hks(L)$.
	In order to perform step 1-4, we only have to store some pointers
	on the input word; this can be done in $\log\abs w$ space.
	Since $L\in\NL$ step 5 can be performed in $\log\abs w$ space, too, 
	and hence $\Hks(L)\in\NL$.
	
	For the one-sided \hpc $\oneH(L)$ we can use almost the same algorithm.
	The only difference is that the pointer $i$ always is $1$.

	\bigskip
	
	Now, let $\Sig = \{a,\ov a, b, \ov b, c, \ov c\}$, $\alp = a^k$, and
	\begin{equation*}
		w = \alp b \alp \ov\alp \alp c \ov\alp.
	\end{equation*}
	We will prove that \tHw and \oHw are not context-free.
	
	Since context-free languages are closed under intersection with regular languages,
	it suffices to show for a regular language $R$ that the intersections
	$R\cap\tHw$ and $R\cap\oHw$ are not context-free. Let $u = \ov b \ov\alp$ and
	$v = \alp \ov\alp \ov b \ov\alp$.
	Note that $\ov u\alp \leq \ov v\alp \leq w$.
	Define
	\begin{equation*}
		R = w u^+ v \ov u^+ \ov w \ov u^+ \ov w
	\end{equation*}
	and consider a word $z\in R$:
	\begin{equation*}
		z = \ubr{\alp b \alp \ov \alp \alp c \ov \alp}{w}
			\ubr{(\ov b \ov \alp)^r}{u^r}
			\ubr{\alp \ov\alp \ov b \ov\alp}{v}
			\ubr{(\alp b)^s}{\ov u^s}
			\ubr{\alp \ov c \ov\alp \alp \ov\alp \ov b \ov\alp}{\ov w}
			\ubr{(\alp b)^t}{\ov u^t}
			\ubr{\alp \ov c \ov\alp \alp \ov\alp \ov b \ov\alp}{\ov w}
	\end{equation*}
	with $r,s,t \geq 1$. At first, note that $w$ is a prefix of $z$ and it does not
	occur as another factor in $z$ (there is only one $c$ in $z$).
	Thus, if $z$ belongs to \tHw, it must be an iterated right hairpin completion of $w$
	and hence
	\begin{equation*}
		R\cap\tHw = R\cap\oHw.
	\end{equation*}
	
	Next, we will show that $z$ is an iterated hairpin completion of $w$
	if and only if $r=s=t$. The proof is a straight forward construction of $z$.
	We try to find a sequence $w=w_0,w_1,\ldots,w_n = z$ for some $n\geq0$ where $w_i\neq w_{i-1}$
	is a right hairpin completion of $w_{i-1}$ for $1\leq i\leq n$.
	This implies that every $w_i$ is a prefix of $z$.

	Fortunately, for each of the words $w_0,\ldots,w_{r+1}$ there is exactly one choice
	which satisfies these conditions:
	\begin{alignat*}{2}
		w_0     & = w &&=
			\alp b \alp \ov \alp \alp c \ov \alp \\
		w_1     & = w u &&=
			\alp b \alp \ov \alp \alp c \ov\alp \ov b\ov\alp \\
		w_2     & = w u^2 &&=
			\alp b \alp \ov \alp \alp c \ov\alp (\ov b \ov\alp)^2 \\
		&\mspace{10mu}\vdots &&\mspace{10mu}\vdots \\
		w_r     & = w u^r &&=
			\alp b \alp \ov \alp \alp c \ov\alp (\ov b \ov\alp)^r \\
		w_{r+1} & = w  u^r  v &&=
			\alp b \alp \ov \alp \alp c \ov\alp (\ov b \ov\alp)^r \alp\ov\alp\ov b \ov\alp
	\end{alignat*}
	
	If $s \neq r$, none of the right hairpin completions of
	$w_{r+1}$ is a prefix of $z$ (except for $w_{r+1}$ itself).
	Otherwise, we find exactly one right hairpin completion which satisfies the conditions:
	\begin{equation*}
		w_{r+2} = w u^r v \ov u^r\ov w =
			\alp b \alp \ov \alp \alp c \ov\alp (\ov b \ov\alp)^r \alp\ov\alp\ov b \ov\alp
				(\alp b)^r \alp \ov c \ov\alp \alp \ov\alp \ov b \ov\alp.
	\end{equation*}

	The argument for the last step is the same. If and only if $t=r$, we find a prefix
	of $z$ which is a right hairpin completion of $w_{r+2}$ and this is $w_{r+3}=z$.
	
	We conclude $z$ is an iterated hairpin completion of $w$ if and only if $r=s=t$ and hence
	\begin{equation*}
		R\cap\tHw = \set{w u^r v \ov u^r \ov w \ov u^r\ov w}{r\geq 1}.
	\end{equation*}
	The intersection $R\cap\tHw$ belongs to a family of
	context-sensitive languages which are well known to be non-context-free.
	From this it follows that \tHw and \oHw are non-context-free, too.
\end{proof}

\section{Final Remarks and Open Problems}

We proved that language classes which have very basic closure properties
are closed under iterated bounded hairpin completion.
With the techniques used in our proof we obtain a better insight on the
structure of the iterated bounded hairpin completion.
This might help to design new algorithms which decide the membership of a word
to the iterated bounded hairpin completion of a given language
and also for the unbounded version since for a given word there
is an implicit given length bound.

Another interesting problem regarding the hairpin completion is whether the 
iterated hairpin completion of two languages have a common element.
Even for two given singletons it is not known, if this problem is decidable at all,
see \cite{ManeaMY10}.
The result of Section~\ref{sec:ihpc} proves that this is a non-trivial question.
However, in the bounded case we can decide this problem for two regular languages now.
We just need to create the NFAs and test whether the intersection is empty.
As the size of the NFAs is quite large with respect to the length bounds,
this does not seem to be the best way to decide the problem.

We proved the existence of non-context-free languages in  
the language class \HCS.
Here, two new questions arise naturally:
\begin{enumerate}
	\item Does a singleton exist whose \ihpc is context-free but not regular?
	\item Can we decide for a given singleton whether its \ihpc is non-regular (or non-context-free)?
\end{enumerate}


\newcommand{\Ju}{Ju}\newcommand{\Ph}{Ph}\newcommand{\Th}{Th}\newcommand{\Ch}{C%
h}\newcommand{\Yu}{Yu}\newcommand{\Zh}{Zh}

\end{document}